\DeclareFontFamily{U}{mathb}{\hyphenchar\font45}
\DeclareFontShape{U}{mathb}{m}{n}{ <-6> mathb5 <6-7> mathb6 <7-8>
  mathb7 <8-9> mathb8 <9-10> mathb9 <10-12> mathb10 <12-> mathb12 }{}
\DeclareSymbolFont{mathb}{U}{mathb}{m}{n}
\DeclareMathSymbol{\prec}{\mathrel}{mathb}{"A0}
\DeclareMathSymbol{\succ}{\mathrel}{mathb}{"A1}
\DeclareMathSymbol{\preceq}{\mathrel}{mathb}{"A8}
\DeclareMathSymbol{\succeq}{\mathrel}{mathb}{"A9}
\DeclareMathSymbol{\precneq}{\mathrel}{mathb}{"AC}
\DeclareMathSymbol{\succneq}{\mathrel}{mathb}{"AD}
\definecolor{gray}{gray}{0.4}
\newcommand{\indictype}{}
\newcommand{\removelatexerror}{\let\@latex@error\@gobble}
\newtheorem{theorem}{Theorem}[section]
\newtheorem{lemma}[theorem]{Lemma}
\newtheorem{proposition}[theorem]{Proposition}
\newtheorem{definition}[theorem]{Definition}
\newtheorem{example}[theorem]{Example}
\newcommand{\LM}{\mathrm{lm}}
\newcommand{\LT}{\mathrm{lt}}
\newcommand{\ZZ}{\mathbb{Z}}
\newcommand{\NN}{\mathbb{N}}
\newcommand{\divides}{\mid}
\newcommand{\LC}{\mathrm{lc}}
\newcommand{\lm}{\LM}
\newcommand{\lt}{\LT}
\newcommand{\lc}{\LC}
\newcommand{\lcmLM}{\textup{lcmlm}}
\newcommand{\lcmlm}{\lcmLM}
\newcommand{\lcmlt}{\textup{lcmlt}}
\newcommand{\lcm}{\mathrm{lcm}}
\newcommand{\intint}[1]{\llbracket #1 \rrbracket}
\renewcommand{\bar}[1]{\overline{#1}}
\newcommand{\npols}{n_\mathrm{polys}}
\newcommand{\nvars}{n_\mathrm{vars}}
\newcommand{\II}{\mathcal{I}}
\newcommand{\GG}{\mathcal{G}}
\newcommand{\bfu}{\mathbf{u}}
\newcommand{\bfe}{\mathbf{e}}
\newcommand{\bff}{\mathbf{f}}
\newcommand{\bfp}{\mathbf{p}}
\newcommand{\bfg}{\mathbf{g}}
\newcommand{\bfh}{\mathbf{h}}
\newcommand{\bfs}{\mathbf{s}}
\newcommand{\bft}{\mathbf{t}}
\newcommand{\bfT}{\mathbf{T}}
\newcommand{\bfz}{\mathbf{z}}
\newcommand{\sig}{\textup{sig}}
\newcommand{\redsig}{\textup{sig}}
\newcommand{\redpol}{\textup{s}}
\newcommand{\sigS}{\textup{sig}}
\newcommand{\lmS}{\textup{mdeg}}
\newcommand{\ltS}{\textup{tdeg}}
\newcommand{\sigG}{\textup{sigG}}
\newcommand{\sigGcomb}{\textup{sigG-Comb}}
\newcommand{\mm}[1]{\textup{m}_{#1}}
\newcommand{\cc}[1]{\textup{c}_{#1}}
\newcommand{\SPol}{\textup{S-Pol}}
\newcommand{\GPol}{\textup{G-Pol}}
\newcommand{\SGPol}{\textup{SG-Pol}}
\newcommand{\sppair}[1]{(#1,\bar{#1})}
\newcommand{\Syz}{\mathrm{Syz}}
\newcommand{\Mon}{\mathrm{Mon}}
\newcommand{\Ter}{\mathrm{Ter}}
\renewcommand{\AA}{A}
\begin{document}

\title{On Two Signature Variants Of Buchberger's Algorithm\\
  Over Principal Ideal Domains}

\author{Maria Francis}
\affiliation{
  \institution{Indian Institute of Technology Hyderabad}
  \city{Hyderabad}
  \country{India}  
}
\email{mariaf@iith.ac.in}

\author{Thibaut Verron}
\affiliation{
  \institution{Institute for Algebra / Johannes Kepler University}
  \city{Linz}
  \country{Austria}  
}
\email{thibaut.verron@jku.at}

\thanks{T.\ Verron was supported by the Austrian FWF grant P31571-N32.}



\begin{abstract}
  Signature-based algorithms have brought large improvements in the performances of Gröbner bases algorithms for polynomial systems over fields.
  Furthermore, they yield additional data which can be used, for example, to compute the module of syzygies of an ideal or to compute coefficients in terms of the input generators.

  In this paper, we examine two variants of Buchberger's algorithm to compute Gröbner bases over principal ideal domains, with the addition of signatures.
  The first one is adapted from Kandri-Rody and Kapur's algorithm~\cite{Kandri-Rody-Kapur}, whereas the second one uses the ideas developed in the algorithms by L.~Pan~\cite{Pan:Dbases}
  and D.~Lichtblau~\cite{Lichtblau}.
  The differences in constructions between the algorithms entail differences in the operations which are compatible with the signatures, and in the criteria which can be used to discard elements.

  We prove that both algorithms are correct and discuss their relative performances in a prototype implementation in Magma.
\end{abstract}

\begin{CCSXML}
  <ccs2012>
  <concept>
  <concept_id>10010147.10010148.10010149.10010150</concept_id>
  <concept_desc>Computing methodologies~Algebraic algorithms</concept_desc>
  <concept_significance>500</concept_significance>
  </concept>
  </ccs2012>
\end{CCSXML}

\ccsdesc[500]{Computing methodologies~Algebraic algorithms}


\vspace{-1.5mm}

\keywords{Algorithms, Gröbner bases, Signature-based algorithms, Polynomials over rings, Principal Ideal Domains}

\maketitle

\addtolength\abovedisplayskip{-0.2\baselineskip}%
\addtolength\belowdisplayskip{-0.2\baselineskip}%


\section{Introduction}
\label{sec:Introduction-1}

Gröbner bases over fields, introduced by Buchberger \cite{Buchberger:1965:thesis}, is a fundamental tool in  computational ideal theory and algebraic geometry. Very early on, several approaches were proposed to extend the algorithmic theory of Gröbner bases to polynomial rings over rings, a summary of which can be found in~\cite{Adams:1994:introtogrobnerbasis, Becker:1993:grobnerbasistext}.
Ideals in polynomial rings over rings have several applications, for instance in number theory~\cite{Lichtblau:applications}, in coding theory~\cite{Norton-2001-strong-groebner-bases, Norton-2003-cyclic-codes-minimal} or in 
cryptography
~\citep{Lyubashevsky:2006:Ideallatticefirstdef, FrancisDukkipati:2014:Hash}. 

There are two ways to define Gröbner bases (GB) over rings, namely weak and strong Gröbner bases, corresponding to two notions of reductions.
Of the two, strong GB and reductions are the most similar to fields and the ones we consider in this work. It allows to efficiently compute the normal form of an element and over principal ideal domains (PID's), all ideals admit a strong GB.
Different algorithms have been proposed for computing strong bases over PID's~\cite{Moller:1988:grobnerrings2,Pan:Dbases} and over Euclidean domains~\cite{Kandri-Rody-Kapur, Lichtblau, Eder:2017:EuclideanRings, EderPopsecu:Standardbases:2018}.

Buchberger's original algorithm for computing Gröbner bases over fields proceeds by computing and reducing S-polynomials.
Over rings, the computation of Gröbner bases additionally requires to compute so-called G-polynomials, that is, combinations of polynomials which use Bézout coefficients to make the leading coefficient as small as possible.
Kandri-Rody and Kapur's algorithm~\cite{Kandri-Rody-Kapur} was designed for Euclidean domains but works without any modification over PIDs; it proceeds by computing, for each pair of elements, both their S- and G-polynomials, and adding them to the queue for later processing.
Pan's algorithm~\cite{Pan:Dbases} for PIDs, later refined by Lichtblau~\cite{Lichtblau} for Euclidean domains, observes that for each pair, only one polynomial, S- or G-, is required.

Over fields, it was rapidly noticed that many of the reductions in Buchberger's algorithm are useless, \emph{i.e.}, they eventually reach 0 and are discarded. Optimizations of Buchberger's algorithm that started with Buchberger~\cite{buchberger1979criterion} have focused on how to detect these useless reductions beforehand~\cite{moller1992grobner}.
A breakthrough came in the early 2000s with the class of so-called signature-based algorithms such as F5~\cite{Faugere:2002:F5} and later GVW~\cite{Gao-2015-new-framework-for}. A comprehensive survey of signature-based algorithms can be found in \cite{eder:2017:survey}.
These algorithms keep track of the signature of each computed polynomial, that is, the leading term of a representation of the polynomial in terms of the generators of the ideal.
This information can be used to detect reductions to 0, and avoid redundant computations.

Furthermore, the computation of a Gröbner basis with signatures allows to recover the coefficients of the elements of the basis in terms of the generators, and to compute the module of syzygies of those generators, without the extra cost of module computations or additional variables~\cite{Gao-2015-new-framework-for}.

The natural next step is to see whether these signature-based techniques can be generalized to Gröbner basis algorithms over rings. In this direction, a hybrid algorithm was presented in~\cite{Eder:2017:EuclideanRings} that added signatures to a modified version of Kandri-Rody and Kapur's algorithm. The authors showed with a counter-example that implementing \emph{totally ordered} signatures for rings cannot ensure that the signatures will never decrease/drop during the course of computing the strong GB, which is the key invariant of most signature-based algorithms.
The signature-based techniques of \cite{Eder:2017:EuclideanRings} could however be used as an efficient preprocessing step to speed up the computations, falling back to the classical techniques when a drop in signature is detected.

In~\citep{FV2018}, the authors described an algorithm that computes a weak Gröbner basis with signatures, over PID's, without any signature drop, by using a partial order on the signatures.
However, as is the case without signatures, this algorithm is mostly of theoretical interest, as the exponential cost of computing weak S-polynomials makes it impractical for all but the smallest examples.
In this work, we use similar constructions to adapt Kandri-Rody and Kapur's algorithm and Pan/Lichtblau's algorithm to the computation of signature Gröbner bases.
For that purpose, we need two constructions: a restriction on the construction of G-polynomials ensuring that we can keep track of their signatures without discarding any; and an analogous construction using Bézout coefficients to obtain elements with small signatures.
In the case of Kandri-Rody and Kapur's algorithm, we prove that the powerful cover criterion described in~\cite{Gao:2010:GVW} can be applied to eliminate some S-polynomials.
In the case of Pan/Lichtblau's algorithm, the nature of the pairs being computed forces  us to relax the restrictions on S-polynomials, and limits the scope of the criteria.
In both cases, we prove that the algorithms are correct and compute both a signature-Gröbner basis of the ideal, and a basis of the signatures of its syzygies.

We have implemented both algorithms in the computer algebra system \textsf{Magma}~\cite{Magma}, with additional optimizations and criteria, and observe that the relaxed restrictions in Pan/Lichtblau tend to lead to the computation of more pairs.
We also compare the time taken for computing the signature Gröbner basis and using it to recover information on the module, with \textsf{Magma} implementations of \emph{ad-hoc} functions for that purpose, and show that using signatures allows for a significant speed-up of those operations.

\vspace{-0.2cm}
\section{Notations and Preliminaries}
\label{sec:notat-conv}

\subsection{Conventions and notations}
\label{sec:notations}

Let $\NN$ be the set of all non-negative integers.
Let $R$ be a principal ideal domain (PID) that has a unit element and is commutative.
We assume that $R$ is effective in the sense that one can perform all the arithmetic operations in $R$, obtain the $\gcd$ of elements and compute Bézout coefficients.
A typical example of such a ring is the ring of integers $\ZZ$, with Euclid's algorithm and its extended version.


%
Let $A = R[x_{1}, \dots,x_{\nvars}]$ be the polynomial ring in $\nvars$ indeterminates $x_{1},\dots,x_{\nvars}$ over $R$.
A monomial in $A$ is an element of the form $
x_{1}^{i_{1}} \dots x_{\nvars}^{i_{\nvars}}$ where $(i_{1},\dots,i_{\nvars}) \in \NN^{\nvars}$.
A term in $A$ is $a\mu$, where $a \in R \setminus \{0\}$ and $\mu$ is a monomial.
The set of terms (resp. monomials) of $A$ is denoted by $\Ter(A)$ (resp. $\Mon(A)$).
In the rest of the paper, unless stated otherwise, we shall use letters $a,b,\dots$ for coefficients in $R$ and $\mu, \nu, \dots$ for monomials in $\Mon(A)$.

A monomial order is an order on $\Mon(A)$ which is compatible with multiplication and well-founded.
In the rest of the paper, we assume that $\AA$ is endowed with an implicit monomial order $<$, and we define as usual the leading monomial $\LM$, the leading term $\LT$ and the leading coefficient $\LC$ of a given polynomial.
By convention, we set $\lm(0)=\lt(0)=\lc(0)=0$.

Given a pair of polynomials $(f,g)$, we denote $\lcmlm(f,g)$ (resp. $\lcmlt(f,g)$) the least common multiple of the leading monomials (resp. leading terms) of $f$ and $g$.

  
Given a set of polynomials $f_{1},\dots,f_{\npols}$ in $A$, we 
consider the free module $\mathbf{M} = A^{\npols}$ with basis $\bfe_{1},\dots,\bfe_{\npols}$.
For $\alpha \in \mathbf{M}$ with
$\alpha = (\alpha_{1},\dots,\alpha_{\npols})$, we define $\bar{\alpha} = \sum \alpha_{i}f_{i}$.
We define the module
\begin{equation}
  \label{eq:1}
  \II = \left\{ \sppair{\alpha} : \alpha \in \mathbf{M} \right\} \subset A^{\npols+1}.
\end{equation}
The module $\II$ is isomorphic to $\mathbf{M}$, and in particular it is free with basis $(\bfe_{1},f_{1}), \dots, (\bfe_{\npols},f_{\npols})$.
The image of the projection of $\II$ onto the last coordinate is the ideal, $\langle f_{1},\dots,f_{\npols}\rangle$.
A \emph{syzygy} of $\II$ is an element $\bfz = \sppair{\alpha} \in \II$ such that $\bar{\alpha}=0$.
The set of all syzygies of $\II$ is denoted by $\Syz(\II)$, it is a $A$-submodule of $\II$.

A monomial of $\mathbf{M}$ is an element of the form $\mu \bfe_{i}$, with $\mu \in \Mon(A)$
and $i \in \intint{1,\npols}$.
A term of $\mathbf{M}$ is an element of the form $c \mathbf{m}$ where $c \in R$ and $\mathbf{m}$ is a monomial of $\mathbf{M}$.
As before, the set of terms (resp. monomials) of $\mathbf{M}$ is denoted by $\Ter(\mathbf{M})$ (resp. $\Mon(\mathbf{M})$).
Given module terms $\mathbf{m}$ and $\mathbf{m}'$, we say that $\mathbf{m}$ is divisible by $\mathbf{m}'$ if there exists a term $t \in \Ter(A)$ such that $\mathbf{m} = t \mathbf{m}'$.

A monomial ordering on $\mathbf{M}$ is an ordering $\prec$ on $\Mon(\mathbf{M})$ such that for any $\mathbf{m}, \mathbf{n} \in \Mon(\mathbf{M})$ and $\mu,\nu \in \Mon(A)$,
\begin{enumerate}
  \item if $\mathbf{m} \prec \mathbf{n}$, then $\mu \mathbf{m} \prec \mu \mathbf{n}$;
  \item if $\mu < \nu$, then $\mu \mathbf{m} \prec \nu \mathbf{m}$.
\end{enumerate}

Examples of orderings on $\mathbf{M}$ are the \emph{position over term} (or PoT) ordering,
defined as $\mu \bfe_{i} \prec_{\textup{PoT}} \nu \bfe_{j}$ if $i < j$, or $i=j$ and $\mu < \nu$,
and the \emph{term over position} (or ToP) ordering,
defined as $\mu \bfe_{i} \prec_{\textup{ToP}} \nu \bfe_{j}$ if $\mu < \nu$,
or $\mu = \nu$ and $i < j$.

As in the case of polynomials, a monomial ordering on $\mathbf{M}$ can be extended into a partial term ordering.
Let $\bfs = a\mu\bfe_{i}$ and $\bft = b\nu\bfe_{j} \in \Ter(\mathbf{M})$, we write $\bfs \simeq \bft$ if $\bfs$ and $\bft$ are incomparable, that is, if $\mu=\nu$ and $i=j$. 
Equality $\bfs=\bft$ holds if 
$\bfs \simeq \bft$ and $a=b$.
We say that $\bfs \preceq \bft$ if $\bfs \prec \bft$ or $\bfs \simeq \bft$, and similarly, $\bfs \precneq \bft$ implies that $\bfs \nsimeq \bft$.
It is harmless because $\simeq$ is an equivalence relation and $\prec$ is a total order on the quotient, so, for example, if $\bfs \simeq \bft$ and $\bfs \prec \bfu$, then $\bft \prec \bfu$.

Given an element $\bfp = \sppair{\alpha} \in \II$, we define the leading term $\lt$, leading monomial $\lm$ and leading coefficient $\lc$ of $\bfp$ to be those of $\bar{\alpha}$.
The \emph{signature} of $\bfp$ is the leading term of the module element $\alpha$ for the module monomial ordering $\prec$, \emph{i.e.}, the largest module term appearing in $\alpha$, and it is denoted as $\sig(\bfp)$.

Readers familiar with signature-based algorithms over fields should note that in our setting, the signature of an element $\bfp$ is a \emph{term} $a\mu\bfe_{i}$, including a coefficient.
This will incur a minor overhead in adding and multiplying coefficients, but signature operations still have a negligible cost compared to operations on the polynomials.






\subsection{Signature Gröbner bases}
\label{sec:regul-s-polyn}

In this section, we introduce generalizations to rings of constructions used in signature Gröbner bases over fields.
These constructions extend those introduced in~\cite{FV2018}.

The key idea, as in the case of fields, is that for each element $\bff = \sppair{\alpha}$, we need only keep track of $\sig(\bff)=\lt(\alpha)$ and $\bar{\alpha}$, instead of the full module representation $\alpha$.
For that purpose, we restrict to operations which do not cancel the signatures.

\begin{definition}
  Let $\bff, \bfg \in \II$. 
  The sum $\bff + \bfg$ is called 
  \emph{regular} if $\sig(\bff) \nsimeq \sig(\bfg)$, and
  \emph{singular} if $\sig(\bff) = -\sig(\bfg)$.
\end{definition}
The nature of the operation yields information about the signature of the result, as follows.
\begin{proposition}
  Let $\bff=\sppair{\alpha}$ and $\bfg=\sppair{\beta} \in \II$, let $\bfh = \sppair{\gamma} = \bff + \bfg$.
  Then,
  \begin{itemize}
    \item $\bff + \bfg$ is a regular addition iff
    $\sig(\bfh) = \max(\sig(\bff),\sig(\bfg))$;
    \item $\bff + \bfg$ is a 
    non-singular addition iff $\sig(\bfh) = \sig(\bff) + \sig(\bfg) \simeq \sig(\bff) \simeq \sig(\bfg)$;
    \item $\bff + \bfg$ is a singular addition iff $\sig(\bfh) \precneq \sig(\bff) \simeq \sig(\bfg)$.
  \end{itemize}
\end{proposition}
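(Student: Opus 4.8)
The plan is to reduce everything to bookkeeping on the module terms of $\gamma=\alpha+\beta$ and then close the biconditionals by exhaustion. Write $\bfs=\sig(\bff)=\lt(\alpha)$ and $\bft=\sig(\bfg)=\lt(\beta)$, and recall the elementary fact that a module element is a sum of terms carrying \emph{pairwise distinct} monomials; hence every term of $\alpha$ other than $\bfs$ has module monomial strictly $\prec$ that of $\bfs$, and likewise for $\beta$ and $\bft$. I would first record two observations that make the reverse implications cheap once the forward ones are in hand. First, the three hypotheses partition all cases, provided ``non-singular'' here is read as ``neither regular nor singular'', i.e.\ $\bfs\simeq\bft$ with $\bfs\neq-\bft$; explicitly, the cases are $\bfs\nsimeq\bft$, then $\bfs\simeq\bft$ and $\bfs\neq-\bft$, then $\bfs=-\bft$. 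Second, the three conclusions are pairwise incompatible: the second forces $\sig(\bfh)\simeq\bfs$, the third forces $\sig(\bfh)\prec\bfs$, and the first presupposes that $\bfs$ and $\bft$ are comparable, i.e.\ $\bfs\nsimeq\bft$. Given these, it suffices to prove the three left-to-right implications, since then each alternative hypothesis would force one of the incompatible alternative conclusions.

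For the regular case I would use that $\prec$ is a total order on the quotient of $\Ter(\mathbf{M})$ by $\simeq$ to assume, without loss of generality, $\bft\prec\bfs$, so that $\max(\bfs,\bft)=\bfs$. Every term of $\beta$ is $\preceq\bft\prec\bfs$, hence $\prec\bfs$ by the transitivity of $\prec$ over $\simeq$ noted in the preliminaries, hence carries a module monomial distinct from that of $\bfs$. Therefore the coefficient of that monomial in $\gamma$ equals the (nonzero) coefficient of $\bfs$, while every other monomial occurring in $\gamma$ is contributed only by terms of $\alpha$ or $\beta$ that are $\prec\bfs$; thus $\lt(\gamma)=\bfs=\max(\bfs,\bft)$. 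For the remaining two cases we have $\bfs=a\mu\bfe_i$ and $\bft=b\mu\bfe_i$ for one common monomial $\mu\bfe_i$; all other terms of $\alpha$ and of $\beta$ have monomial $\prec\mu\bfe_i$, so $\gamma$ consists of the single term $(a+b)\mu\bfe_i$ together with terms of monomial $\prec\mu\bfe_i$. In the non-singular case $a+b\neq 0$, which is exactly the condition $\bfs\neq-\bft$, so $\sig(\bfh)=(a+b)\mu\bfe_i=\bfs+\bft\simeq\bfs\simeq\bft$; in the singular case $a+b=0$, the $\mu\bfe_i$-component of $\gamma$ vanishes, and $\sig(\bfh)$ is either $0$ or a term of monomial $\prec\mu\bfe_i$, so $\sig(\bfh)\precneq\bfs\simeq\bft$.

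I do not anticipate a real obstacle here; the work lives entirely in the monomial bookkeeping, and the only points needing care are the degenerate subcase $\bfh=0$ in the singular case (where $\sig(\bfh)=0$, and one uses that $0$ lies below every module term so that $0\precneq\bfs$ still holds), keeping the strict order $\prec$, the relation $\precneq$, and the incomparability $\simeq$ straight when chaining inequalities, and making explicit the ``pairwise distinct monomials'' fact about a single module element, which is precisely what lets one read off $\lt(\gamma)$ coefficient by coefficient.
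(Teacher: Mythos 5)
Your proof is correct. The paper itself gives no proof (it states only that ``the proof of the proposition is straightforward''), and your argument is precisely the intended elementary bookkeeping on the terms of $\gamma=\alpha+\beta$, with the right care taken for the degenerate case $\bfh=0$ and for the logical step that turns the three one-way implications into biconditionals via the partition of hypotheses and the pairwise incompatibility of the conclusions.
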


The proof of the proposition is straightforward.
Note that in a singular addition, the signature of the result cannot be computed from the signatures of the summands. 
This phenomenon is called a \emph{signature drop}.
In order to avoid it and be able to keep track of the signatures, the algorithms must disallow singular operations.




Signature Gröbner bases, as in the case of fields, are characterized by the fact that all elements of the ideal are s-reducible, that is, reducible without increasing the signature.
In the case of rings, different notions of reduction exist, namely weak and strong reductions, as well as modular reductions by the coefficients.
In this paper, we only consider strong reductions
, which require that the leading coefficient of the reducer divides that of the reducee.
Those reductions allow to define strong Gröbner bases.
In the rest of the paper, we shall omit the ``strong'' qualificative.



\begin{definition}
\label{weak-sig-reduction}
Let $\mathcal{G} 
\subset \II$
and $\bff,\bfh 
\in \II$.
  We say that $\bff$ \emph{(strongly) \redpol-reduces} to $\bfh$ modulo $\GG$ if there exists $\bfg_{i} \in \GG$ and $t_{i} \in \Ter(A)$ such that
  \begin{enumerate}
    \item $\lt(\bff) = t_{i} \lt(\bfg_{i})$
    \item $\bfh = \bff - t_{i} \bfg_{i}$ 
    \item $t_{i} \sig(\bfg_{i}) \preceq \sig(\bff)$
  \end{enumerate}
    If the signature inequality is strict, $t_{i}\sig(\bfg_{i}) \precneq \sig(\bff)$, it is called a \emph{regular} \redpol-reduction, and if $t_{i}\sig(\bfg_{i}) = \sig(\bff)$, it is called a \emph{singular} \redpol-reduction. 
    It is called a s-reduction to zero if the polynomial part of $\bfh$ is zero.
  By abuse of language, we extend these definitions to sequences of reductions. 
\end{definition}

  If $\bff$ \redpol-reduces to $\bfh$ modulo $\GG$, then $\sig(\bfh) \preceq \sig(\bfg)$, with equality if and only if the reduction is regular and strict inequality if and only if the reduction is singular.
  Note that an s-reduction might be neither regular nor singular, in which case $\sig(\bfh) \simeq \sig(\bff)$.

We then recall the definition of (strong) signature Gröbner bases.\footnote{In \cite{Gao-2015-new-framework-for}, a signature GB is called a strong GB. We use Sig-GB here to avoid conflict with the existing notion of strong GB over rings.}

\begin{definition}
  Let $\GG \subset \II$ and $\bfT \in \Ter(\mathbf{M})$.
  $\GG$ is called a (strong) signature Gröbner basis (or Sig-GB for short) up to signature $\bfT$ if 
    every $\bff \in \II$ with $\sig(\bff) \precneq \bfT$ is $\redpol$-reducible modulo $\GG$. 
  $\GG$ is called a signature Gröbner basis if it is a signature Gröbner basis up to $\bfT$ for all $\bfT$.
\end{definition}

The original motivation for the use of signatures is to maintain a list of signatures of known syzygies, and use it to predict reductions to zero.
Additionally, the last coordinates of elements of a Sig-GB form a GB in the classical sense.
The proof of that fact~\citep[Lem.~4.6]{eder:2017:survey} can be directly extended to rings.
So signature-based algorithms allow to compute classical Gröbner bases in a more efficient way.

This use of syzygies applies to our case as well, and requires to define reductions by signatures of syzygies.

\begin{definition}
    Let $\GG_{z} \subset \Syz(\II) \subset \II$ and let $\bff \in \II$, with $\sig(\bff) = a\mu\bfe_{i}$, for $a \in R$ and $\mu \in \Mon(A)$.
    We say that $\bff$ is \emph{$\redsig$-reducible} modulo $\GG_{z}$ if there exists $\bfz \in \GG_{z}$ such that $\sig(\bfz)$ divides $\sig(\bff)$.

  Let $\bfT \in \Mon(\mathbf{M})$ , we say that $\GG_{z}$ is a Sig-basis of syzygies (resp. basis up to $\bfT$) if any syzygy of $\II$ (resp. syzygy with signature $\precneq \bfT$) is sig-reducible by $\GG_{z}$.
\end{definition}

\begin{proposition}
  \label{prop:syz-basis}
  Let $\GG$ be a Sig-GB up to signature $\bfT$, $\GG_{z} \subset \Syz(\II)$ 
  and $\bff \in \II$ with $\sig(\bff) \preceq \bfT$.
  If $\bff$ is sig-reducible modulo $\GG_{z}$, then $\bff$ regular s-reduces to 0 modulo $\GG$.
\end{proposition}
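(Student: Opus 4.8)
The plan is to subtract from $\bff$ a suitable multiple of the reducing syzygy $\bfz$ so as to obtain an element $\bff'$ with the same polynomial part but strictly smaller signature, to apply the Sig-GB hypothesis to $\bff'$, and to lift the resulting reduction of $\bff'$ back to a reduction of $\bff$, observing that it is automatically \emph{regular} because $\sig(\bff)$ strictly dominates. The argument will run by Noetherian induction on $\lm(\bff)$, viewed in $\Mon(A)\cup\{0\}$ with $0$ the minimum (well-founded because the monomial order is).

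If the polynomial part $\overline{\bff}$ (the last coordinate) of $\bff$ is $0$, the empty reduction sequence already shows that $\bff$ regularly $\redpol$-reduces to $0$; so assume $\overline{\bff}\neq 0$. Since $\bff$ is sig-reducible modulo $\GG_{z}$, pick $\bfz\in\GG_{z}$ with $\sig(\bfz)\divides\sig(\bff)$ and write $\sig(\bff)=t\,\sig(\bfz)$ for some $t\in\Ter(A)$. Set $\bff':=\bff-t\bfz\in\II$. As $\bfz$ is a syzygy, $t\bfz$ has zero polynomial part, so $\overline{\bff'}=\overline{\bff}$; in particular $\lt(\bff')=\lt(\bff)$. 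Also $\sig(t\bfz)=t\,\sig(\bfz)=\sig(\bff)$ (using that $R$ is a domain, so the signature coefficient does not cancel), hence $\bff+(-t\bfz)$ is a singular addition, and by the earlier proposition on the signature of a sum, $\sig(\bff')\precneq\sig(\bff)\preceq\bfT$; in particular $\sig(\bff')\precneq\bfT$.

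The core step I would isolate as a ``lifting'' claim: for \emph{any} $\redpol$-reduction $\bff'\to\bff'-s\bfg$ modulo $\GG$ (so $\bfg\in\GG$, $s\in\Ter(A)$, $\lt(\bff')=s\,\lt(\bfg)$, $s\,\sig(\bfg)\preceq\sig(\bff')$), the same data defines a reduction step $\bff\to\bff-s\bfg$ (legal since $\lt(\bff)=\lt(\bff')$), and it is \emph{regular} because $s\,\sig(\bfg)\preceq\sig(\bff')\precneq\sig(\bff)$. Moreover $\bff-s\bfg=(\bff'-s\bfg)+t\bfz$; as $\sig(\bff'-s\bfg)\preceq\sig(\bff')\precneq\sig(\bff)=\sig(t\bfz)$, this is a regular addition, so $\sig(\bff-s\bfg)=\sig(\bff)$. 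Hence $\bff-s\bfg$ is again sig-reducible modulo $\GG_{z}$ (by the same $\bfz$), still has signature $\preceq\bfT$, has the same polynomial part as $\bff'-s\bfg$, and, since the leading terms cancel in a strong reduction, satisfies $\lm(\bff-s\bfg)<\lm(\bff)$ in $\Mon(A)\cup\{0\}$.

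Finally: since $\overline{\bff}\neq 0$ and $\sig(\bff')\precneq\bfT$, the Sig-GB hypothesis makes $\bff'$ $\redpol$-reducible modulo $\GG$; the lifting claim turns one such step into a regular step $\bff\to\bff_{1}$ with $\bff_{1}$ sig-reducible modulo $\GG_{z}$, $\sig(\bff_{1})=\sig(\bff)\preceq\bfT$, and $\lm(\bff_{1})<\lm(\bff)$. By the induction hypothesis, $\bff_{1}$ regular $\redpol$-reduces to $0$ modulo $\GG$; prepending the step $\bff\to\bff_{1}$ gives the desired regular $\redpol$-reduction of $\bff$ to $0$. The points needing care are the signature arithmetic — the singular addition yielding $\sig(\bff')\precneq\bfT$, and the regular additions keeping the signature of every lifted iterate equal to $\sig(\bff)$ — together with the observation that a strong $\redpol$-reduction strictly decreases the leading monomial, which is what makes the induction well-founded; the rest is routine bookkeeping.
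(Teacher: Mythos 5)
Your proof is correct and follows essentially the same route as the paper's: subtract $t\bfz$ to obtain $\bff'$ with the same polynomial part but strictly smaller signature $\precneq\bfT$, apply the Sig-GB hypothesis, and observe that the resulting $\redpol$-reduction of $\bff'$ lifts verbatim to a \emph{regular} $\redpol$-reduction of $\bff$ because every reducer's signature is bounded by $\sig(\bff')\precneq\sig(\bff)$. The only difference is that you unwind the claim ``$\bff'$ s-reduces to $0$'' into an explicit Noetherian induction on $\lm(\bff)$ with a one-step lifting claim, whereas the paper treats the full reduction chain at once; your version is more pedantic but makes explicit the well-founded termination argument that the paper leaves implicit.
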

\begin{proof}
  Let $\bfz \in \GG_{z}$ be such that there exists $t \in \Ter(A)$ with $t \sig(\bfz) = \sig(\bff)$.
  Let $\bfg = \bff - t\bfz$, it has signature $\precneq \bfT$ so, by assumption on $\GG$, it s-reduces to 0 modulo $\GG$.
  Furthermore, since $\bfz$ is a syzygy, by definition its polynomial part is $0$, so the polynomial part of $\bfg$ is equal to that of $\bff$.
  Thus the s-reduction to 0 of $\bfg$ is a s-reduction to 0 of $\bff$, and from the observation on the signatures, it is regular.
\end{proof}

This kind of ``least criminal'' argument, relying on the notion of Sig-GB \emph{up to} some signature, will be a recurring pattern in the proofs.

In the classical case, without signatures, it is sometimes convenient to consider expanded sequences of reductions, leading to the notion of standard representation. 
With signatures, it turns out that a natural generalization of that notion encompasses both s-reductions and sig-reductions.

\begin{definition}
\label{standard-sig-representation}
Let $\mathcal{G} = \{\bfg_{1},\dots,\bfg_{r}\} \subset \II$, $\GG_{z}=\{\bfz_{1},\dots,\bfz_{s}\} \subset \Syz(\II)$ and $\bfh \in \II$. 
Let  $t^{(1)}_{u} \in \Ter(A)$, $i_{u} \in \intint{1,r}$, where $u \in \intint{1,k}$ and $k \in \mathbb{N}$, $t_{v}^{(2)} \in \Ter(A)$, $j_{v} \in \intint{1,s}$, where $v \in \intint{1,l}$ and $l \in \mathbb{N}$, be such that the equality 
\begin{equation}
  \label{eq:19}
  \textstyle\bfh = \sum_{u=1}^{k} t^{(1)}_{u}\bfg_{i_{u}} + \sum_{v=1}^{l} t_{v}^{(2)}\bfz_{j_{v}}
\end{equation}
holds in $\II$, with
\begin{enumerate}
  \item\label{item:sigrep:lt} $\lt(t^{(1)}_{1}\bfg_{i_{1}}) > \lt(t^{(1)}_{2}\bfg_{i_{2}}) \geq \lt(t^{(1)}_{3}\bfg_{i_{3}}) \geq \dots \geq \lt(t^{(1)}_{k}\bfg_{i_{k}})$;
  \item\label{item:sigrep:sigpol} for all $u\in\intint{1,k}$, $\sig(t_{u}^{(1)}\bfg_{i_{u}}) \preceq \sig(\bfh)$;
  \item\label{item:sigrep:sig} $\sig(t_{1}^{(2)}\bfz_{j_{1}}) \succneq \sig(t_{2}^{(2)}\bfz_{j_{2}}) \succeq \sig(t_{3}^{(2)}\bfz_{j_{3}}) {\succeq} \dots {\succeq} \sig(t_{l}^{(2)}\bfz_{j_{l}})$;
  \item\label{item:sigrep:sigsyz} for all $v \in \intint{1,l}$, $\sig(t_{v}^{(2)}\bfz_{j_{v}}) \preceq \sig(\bfh)$.
\end{enumerate}
If such a decomposition exists, we say that \eqref{eq:19} is a \emph{standard Sig-representation} of $\bfh$ with respect to $(\GG,\GG_{z})$.
\end{definition}

\begin{proposition}
  Let $\GG \subset \II$, $\GG_{z} \subset \Syz(\II)$ such that every element of $\II$ admits a standard Sig-representation by $(\GG,\GG_{z})$.
  Then $\GG$ is a Sig-GB and $\GG_{z}$ is a Sig-basis of syzygies.
\end{proposition}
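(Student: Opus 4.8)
The plan is to unfold the definitions and show the two conclusions separately, using standard Sig-representations as the workhorse.

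First I would show that $\GG$ is a Sig-GB. Fix $\bfT \in \Ter(\mathbf{M})$ and take $\bff \in \II$ with $\sig(\bff) \precneq \bfT$; I must prove $\bff$ is $\redpol$-reducible modulo $\GG$. By hypothesis $\bff$ admits a standard Sig-representation $\bff = \sum_u t^{(1)}_u \bfg_{i_u} + \sum_v t^{(2)}_v \bfz_{j_v}$. Since each $\bfz_{j_v}$ is a syzygy, the polynomial part of $\bff$ equals $\sum_u t^{(1)}_u \overline{\bfg_{i_u}}$, and by condition~\eqref{item:sigrep:lt} the leading term of this sum is $\lt(t^{(1)}_1 \bfg_{i_1})$ — the other summands have strictly smaller leading term, so there is no cancellation at the top (here I would be careful, since over a ring the leading \emph{coefficient} could still conspire; but condition~\eqref{item:sigrep:lt} says the second term is \emph{strictly} smaller, so the top term survives and $\lt(\bff) = \lt(t^{(1)}_1 \bfg_{i_1}) = t^{(1)}_1 \lt(\bfg_{i_1})$). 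Thus $\lt(\bfg_{i_1})$ divides $\lt(\bff)$, which gives conditions (1) and (2) of Definition~\ref{weak-sig-reduction} with reducer $\bfg_{i_1}$ and multiplier $t^{(1)}_1$. Condition (3), namely $t^{(1)}_1 \sig(\bfg_{i_1}) \preceq \sig(\bff)$, is exactly condition~\eqref{item:sigrep:sigpol} for $u=1$. Hence $\bff$ is $\redpol$-reducible. Since $\bfT$ was arbitrary, $\GG$ is a Sig-GB.

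Next I would show that $\GG_z$ is a Sig-basis of syzygies. Take $\bff \in \Syz(\II)$, so its polynomial part is $0$; I must find $\bfz \in \GG_z$ with $\sig(\bfz)$ dividing $\sig(\bff)$. Again write the standard Sig-representation of $\bff$. Taking polynomial parts, $0 = \sum_u t^{(1)}_u \overline{\bfg_{i_u}}$; by condition~\eqref{item:sigrep:lt} (strict drop at the second term) this forces the list of $\bfg$-terms to be empty, i.e.\ $k=0$, since otherwise the leading term $t^{(1)}_1 \lt(\bfg_{i_1})$ would be nonzero and uncancelled. Therefore $\bff = \sum_v t^{(2)}_v \bfz_{j_v}$ with $l \geq 1$ (if also $l=0$ then $\bff = 0$, whose signature is, by the convention $\sig(0)=0$, trivially reducible — or this degenerate case can be excluded outright). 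By condition~\eqref{item:sigrep:sig} the signatures of the $\bfz$-terms are strictly decreasing from the first, so $\sig(\bff) = \sig(t^{(2)}_1 \bfz_{j_1}) = t^{(2)}_1 \sig(\bfz_{j_1})$ with no cancellation; in particular $\sig(\bfz_{j_1})$ divides $\sig(\bff)$. Hence $\bff$ is $\redsig$-reducible modulo $\GG_z$, and $\GG_z$ is a Sig-basis of syzygies.

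The main obstacle, and the point requiring the most care, is the no-cancellation arguments over a ring: unlike the field case, one cannot assume leading coefficients are units, so I must lean precisely on the \emph{strict} inequality at the second position in conditions~\eqref{item:sigrep:lt} and~\eqref{item:sigrep:sig} to guarantee that the top polynomial term (resp.\ top signature term) of the representation is not destroyed by the tail. A secondary nuisance is bookkeeping for the degenerate cases $\bff = 0$ or empty sums, which should be handled by the $\sig(0)=0$ convention or excluded as trivial. Everything else is a direct translation of the definitions.
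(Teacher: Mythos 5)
Your proposal is correct and follows essentially the same two-branch argument as the paper's proof: when $\bfh$ is not a syzygy, condition~\ref{item:sigrep:lt} gives $\lt(\bfh) = t^{(1)}_1\lt(\bfg_{i_1})$ and condition~\ref{item:sigrep:sigpol} makes $\bfg_{i_1}$ an s-reducer; when $\bfh$ is a syzygy, $k=0$ follows from the leading-term condition, and conditions~\ref{item:sigrep:sig} and~\ref{item:sigrep:sigsyz} make $\bfz_{j_1}$ a sig-reducer. The paper gives this in three sentences; your extra care about no top-term cancellation (relying on the strict inequality at the second position) and the degenerate zero case is exactly the right point to flag over a ring and is implicit in the paper's argument.
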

\begin{proof}
  Let $\bfh \in \II$. By assumption it admits a standard Sig-representation as in~\eqref{eq:19}.
  If $\bfh$ is not a syzygy, then by property~\ref{item:sigrep:lt} 
  on the leading terms, $\lt(\bfh) = t_{1}^{(1)}\lt(\bfg_{i_{1}})$, and by property~\ref{item:sigrep:sigpol} on the signatures, this is an s-reduction of $\bfh$.
  If $\bfh$ is a syzygy, then again by the property on the leading terms, $k=0$, and properties~\ref{item:sigrep:sig} 
  and \ref{item:sigrep:sigsyz} on the signatures imply that $\bfh$ is sig-reducible by $\bfz_{j_{1}}$.
\end{proof}

We recall how S-polynomials are defined with signatures.
Recall that the notations $\lt$, $\lm$, $\lc$ stand for the leading term, monomial, coefficient of the polynomial part of module elements, and that the notation $\sig$ is used for the module leading term.
First, we give some definitions associated with pairs of module elements.
\begin{definition}
  Let $\bff, \bfg \in \II$, $t_{\bff} = \frac{\lcmlt(\bff,\bfg)}{\lt(\bff)}$,
  $t_{\bfg} = \frac{\lcmlt(\bff,\bfg)}{\lt(\bfg)}$.
  
    The term degree of the pair $(\bff,\bfg)$ is $\ltS(\bff,\bfg) = \lcmlt(\bff,\bfg) = t_{\bff}\lt(\bff) = t_{\bfg}\lt(\bfg)$.
    The monomial degree $\lmS(\bff,\bfg)$ of the pair $(\bff,\bfg)$ is the monomial part of the term degree.

      The pair $(\bff,\bfg)$ is called regular if $t_{\bff}\sig(\bff) \nsimeq t_{\bfg}\sig(\bfg)$ and it is called singular if $t_{\bff}\sig(\bff) + t_{\bfg}\sig(\bfg) = 0$.
      The signature of the pair $(\bff,\bfg)$ is
      $\sigS(\bff,\bfg) = \max(t_{\bff}\sig(\bff), -t_{\bfg}\sig(\bfg))$.
\end{definition}

\begin{definition}
  \label{strong-S-poly-sig}
  Let $\bff
  $ and $\bfg
  \in \II$.
  The S-polynomial of $\bff$ and $\bfg$ is
  \begin{equation}
    \label{eq:73}
    \SPol(\bff,\bfg) = 
    \frac{\lcmlt(\bff,\bfg)}{\lt(\bff)}\bff
    - \frac{\lcmlt(\bff,\bfg)}{\lt(\bfg)}\bfg.
  \end{equation}
  \end{definition}
Let $\bfh= \SPol(\bff,\bfg)$.
Then $\sig(\bfh) \preceq \sigS(\bff,\bfg)$, with equality iff the pair $(\bff,\bfg)$ is regular, and strict inequality iff it is singular.

\begin{example}
  Let $\bff_{1} = (\bfe_{1},4xy + 1)$ and $\bff_{2}=(\bfe_{2},6x^{2}+1)$.
  Then $\bfh=\SPol(\bff_{1},\bff_{2}) = 3x\bff_{1} - 2y\bff_{2} = (3x\bfe_{1}-2y\bfe_{2}, 3x-2y)$.
  With the PoT ordering with $\bfe_{1}\succ\bfe_{2}$, $\sig(\bfh)=3x\bfe_{1} = \sigS(\bff_{1},\bff_{2})$, and the pair $(\bff_{1},\bff_{2})$ is regular.
\end{example}

In order to ensure that elements are strongly $\redpol$-reducible modulo $\GG$, we need to compute G-polynomials%
\footnote{Terminology and notations vary:
  this construction is called T-polynomial in~\cite{Moller:1988:grobnerrings2},
  $SL$ in \cite{Pan:Dbases},
  ``CP2 type (a)'' pairs in~\cite{Kandri-Rody-Kapur},
  S-polynomial of type 1 in~\cite{Lichtblau},
  G-polynomial in~\cite{Becker:1993:grobnerbasistext}
  and
  GCD-polynomial in~\cite{Eder:2017:EuclideanRings,EderPopsecu:Standardbases:2018}.}.
The G-polynomial of $f_{1}$ and $f_{2}$ is a polynomial $f$ such that any linear combination of $f_{1}$ and $f_{2}$ not cancelling the leading terms is reducible by $f$.
It is defined by using Bézout relations to make the leading coefficient as small as possible.
\begin{definition}
  Let $\bff$ and $\bfg \in \II$.
  Let $u,v$ be Bézout coefficients of $\lc(\bff)$ and $\lc(\bfg)$, that is, $u\lc(\bff) + v\lc(\bfg) = \gcd(\lc(\bff),\lc(\bfg))$.
  The G-polynomial of $\bff$ and $\bfg$ associated to $(u,v)$ is defined as
  \begin{equation}
    \label{eq:15}
    \GPol_{u,v}(\bff,\bfg) = u \frac{\lcmlm(\bff,\bfg)}{\lm(\bff)} \bff
    + v \frac{\lcmlm(\bff,\bfg)}{\lm(\bfg)} \bfg.
  \end{equation}
\end{definition}

\begin{figure*}[t]
  \removelatexerror
  \begin{minipage}{\columnwidth}
    \small
    \input{algo_KRK}
    \vspace{-0.2cm}
  \end{minipage}%
  \hspace{\columnsep}%
  \begin{minipage}{\columnwidth}
    \small
    \input{algo_Lichtblau}
    \vspace{-0.2cm}
  \end{minipage}
\end{figure*}


The coefficients $u$ and $v$ are not uniquely determined, and we can use this fact to ensure that G-polynomials \emph{never} represent a singular operation.
\begin{proposition}
  \label{prop:G-pol-nonsing}
  Let $\bff$ and $\bfg \in \II$.
  Then there exists $u,v$ such that $\sig(\GPol_{u,v}(\bff,\bfg)) \simeq \sig(\bff,\bfg)$.
\end{proposition}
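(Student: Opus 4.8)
The plan is to turn the statement into a question about a single module–monomial coefficient of $\GPol_{u,v}(\bff,\bfg)$, and then to use the freedom in the choice of Bézout coefficients to make that coefficient nonzero. First I would fix notation: write $\bff=\sppair{\alpha}$, $\bfg=\sppair{\beta}$, $\sig(\bff)=a_\bff\mathbf{n}_\bff$, $\sig(\bfg)=a_\bfg\mathbf{n}_\bfg$ with $a_\bff,a_\bfg\in R\setminus\{0\}$ and $\mathbf{n}_\bff,\mathbf{n}_\bfg\in\Mon(\mathbf{M})$; set $m_\bff=\lcmlm(\bff,\bfg)/\lm(\bff)$, $m_\bfg=\lcmlm(\bff,\bfg)/\lm(\bfg)$, $d=\gcd(\lc(\bff),\lc(\bfg))$, $\ell=\lcm(\lc(\bff),\lc(\bfg))$, and $\mathbf{m}^{\ast}=\max(m_\bff\mathbf{n}_\bff,\,m_\bfg\mathbf{n}_\bfg)\in\Mon(\mathbf{M})$. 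Since $\GPol_{u,v}(\bff,\bfg)$ has module part $u\,m_\bff\alpha+v\,m_\bfg\beta$, and multiplication by a monomial is injective and order-preserving on $\Mon(\mathbf{M})$ while $R$ is a domain, no monomial strictly above $\mathbf{m}^{\ast}$ occurs in it, and the coefficient of $\mathbf{m}^{\ast}$ is $u a_\bff$ (resp. $v a_\bfg$) when $\mathbf{m}^{\ast}$ equals $m_\bff\mathbf{n}_\bff$ but not $m_\bfg\mathbf{n}_\bfg$ (resp. the symmetric case), and $u a_\bff+v a_\bfg$ when $m_\bff\mathbf{n}_\bff=m_\bfg\mathbf{n}_\bfg$. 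On the other hand $\lcmlt(\bff,\bfg)=\ell\cdot\lcmlm(\bff,\bfg)$, so $t_\bff=(\ell/\lc(\bff))\,m_\bff$ and $t_\bfg=(\ell/\lc(\bfg))\,m_\bfg$; hence $t_\bff\sig(\bff)$ and $t_\bfg\sig(\bfg)$ have module monomials $m_\bff\mathbf{n}_\bff$ and $m_\bfg\mathbf{n}_\bfg$, and $\sigS(\bff,\bfg)=\max(t_\bff\sig(\bff),-t_\bfg\sig(\bfg))$ has module monomial exactly $\mathbf{m}^{\ast}$. So the proposition reduces to finding a Bézout pair $(u,v)$ of $\lc(\bff),\lc(\bfg)$ for which the above coefficient of $\mathbf{m}^{\ast}$ is nonzero.

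If $m_\bff\mathbf{n}_\bff\neq m_\bfg\mathbf{n}_\bfg$ — equivalently, $(\bff,\bfg)$ is a regular pair — then, say, $\mathbf{m}^{\ast}=m_\bff\mathbf{n}_\bff$ and the relevant coefficient is just $u a_\bff$, so I only need a Bézout pair with $u\neq0$. That is easy: the admissible values of $u$ form the coset $u_0+(\lc(\bfg)/d)R$ of a nonzero ideal, which contains a nonzero element (either $u_0$ itself, or $\lc(\bfg)/d$); the case $\mathbf{m}^{\ast}=m_\bfg\mathbf{n}_\bfg$ is symmetric.

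The essential case, and the one I expect to be the main obstacle, is $m_\bff\mathbf{n}_\bff=m_\bfg\mathbf{n}_\bfg=:\mathbf{m}$, which includes every singular pair: I must show that $u a_\bff+v a_\bfg\neq0$ for some Bézout pair, i.e. that a signature drop in the G-polynomial cannot be forced. Writing $a=\lc(\bff)$, $b=\lc(\bfg)$, the general Bézout pair is $(u_0+(b/d)t,\,v_0-(a/d)t)$ with $t$ ranging over $R$, so $u a_\bff+v a_\bfg$ runs through the one-parameter family $w_0+w_1 t$, where $w_0=u_0 a_\bff+v_0 a_\bfg$ and $w_1=(b/d)a_\bff-(a/d)a_\bfg$. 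If $w_0\neq0$ we take $t=0$; otherwise it suffices to prove $w_1\neq0$, since then $t=1$ produces the value $w_1$. Suppose $w_1=0$, i.e. $(b/d)a_\bff=(a/d)a_\bfg$. Because $\gcd(a/d,b/d)=1$ in the PID $R$, this forces $(a/d)\divides a_\bff$, say $a_\bff=(a/d)w$; cancelling $a/d$ (a nonzero element of the domain) gives $a_\bfg=(b/d)w$, and then $w_0=u_0(a/d)w+v_0(b/d)w=(w/d)(u_0 a+v_0 b)=w$. Since we are in the case $w_0=0$, this would give $w=0$, hence $a_\bff=0$, a contradiction. Therefore some Bézout pair $(u,v)$ satisfies $u a_\bff+v a_\bfg\neq0$, and for it $\sig(\GPol_{u,v}(\bff,\bfg))=(u a_\bff+v a_\bfg)\,\mathbf{m}\simeq\sigS(\bff,\bfg)$, as required.

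The heart of the matter is thus this last case: ruling out that every Bézout combination cancels the top module term. What makes it go through is the coprimality of $\lc(\bff)/d$ and $\lc(\bfg)/d$, which prevents the affine family $w_0+w_1 t$ from being identically zero; this is precisely the structural feature absent for S-polynomials, where a singular operation genuinely cannot always be avoided.
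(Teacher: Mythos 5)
Your proof is correct and follows the same overall strategy as the paper's: view the admissible Bézout coefficients as a one-parameter family and show that some member makes the top module coefficient of the G-polynomial nonzero. The case analysis is organized differently. In the paper's notation ($a = \lc(\bff)$, $b = \lc(\bfg)$, $c = \lc(\sig(\bff))$, $d = \lc(\sig(\bfg))$, $g = \gcd(a,b)$), the paper branches on whether $ad - bc = 0$; in the first branch it derives $cu + dv \neq 0$ from the identity $a(cu + dv) = c(au + bv) = cg$, and in the second it applies the explicit shift $(u,v) \mapsto (u+b, v-a)$. You instead branch on whether the initial value $w_0 = u_0 c + v_0 d$ vanishes, and in the vanishing case you show that the slope $w_1 = (bc - ad)/g$ cannot also vanish, via a coprimality-and-cancellation argument in the PID. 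Both proofs hinge on $R$ being a domain, and both amount to the observation that the affine map $t \mapsto w_0 + w_1 t$ on the Bézout family cannot be identically zero. One place where your write-up is actually more careful: for the regular case, the paper asserts that any pair of Bézout coefficients works, but this fails when $u = 0$ (or $v = 0$), which happens precisely when one leading coefficient divides the other; in that situation $\GPol_{0,v}(\bff,\bfg)$ is a unit multiple of $\bfg$ and its signature need not match $\sigS(\bff,\bfg)$. You correctly note that the coset of admissible $u$'s contains a nonzero element, so the existential statement survives — but the paper's ``any pair'' phrasing is a small overclaim that your argument repairs.
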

\begin{proof}
  If the pair is regular, there is nothing to prove, and any pair of Bézout coefficients works.
  Otherwise, let $a = \lc(\bff)$, $b = \lc(\bfg)$, $c = \lc(\sig(\bff))$, $d = \lc(\sig(\bfg))$, and $g = \gcd(a,b)$.
  We want to prove that there exists $u,v$ such that $au+bv = g$ and $cu+dv \neq 0$.
  If $ad-bc = 0$, $a(cu + dv) = c (au + bv) \neq 0$, so again any pair of Bézout coefficients works.
  Otherwise, assume that $au+bv=0$, $cu+dv =0$, and consider the pair $u' = u + b$, $v' = v-a$.
  Then $au'+bv' = g$ and $cu'+dv' =bc - ad \neq 0$.
\end{proof}

In practice, we shall always consider such a pair of Bézout coefficients, and call the corresponding G-polynomial \emph{the} G-polynomial of $\bff$ and $\bfg$, denoted by $\GPol(\bff,\bfg)$.
Note that $\sig(\GPol(\bff,\bfg)) \simeq \sigS(\bff,\bfg)$ and $\lm(\GPol(\bff,\bfg)) = \lmS(\bff,\bfg)$.

\begin{definition}
  Let $\GG \subset \II$.
  We say that $\GG$ is \emph{complete} if every G-polynomial of elements of $\GG$ is s-reducible modulo $\GG$.
\end{definition}
It is always possible to make $\GG$ complete by adding G-polynomials to $\GG$ until the property holds.
We use a similar process, but on the signatures, to create syzygies with small signature coefficients.
\begin{definition}
  Let $\bfz_{1}, \bfz_{2} \in \Syz(\II)$ with respective signatures $a_{k}\mu_{k}\bfe_{i}$, $k = 1,2$, sharing the same index $i$.
  Let $d = \gcd(a_{1},a_{2})$, and let $u_{1}, u_{2}$ be Bézout coefficients.
  Let $\mu = \lcm(\mu_{1},\mu_{2})$.
  The $\sigG$-combination of $\bfz_{1}$ and $\bfz_{2}$ is defined as
  \begin{equation}
    \label{eq:18}
    \sigGcomb(\bfz_{1},\bfz_{2}) = u_{1} \frac{\mu}{\mu_{1}} \bfz_{1} + u_{2} \frac{\mu}{\mu_{2}} \bfz_{2},
  \end{equation}
  and its signature is $d\mu\bfe_{i}$.
%
  Let $\GG_{z} \subset \Syz(\II)$, we say that $\GG_{z}$ is $\sigG$-complete if any $\sigG$-combination 
  of elements of $\GG_{z}$ is sig-red. by $\GG_{z}$.
\end{definition}

We conclude this section with a few definitions which will give useful criteria  to prove the correctness and for detecting useless syzygies in  Algorithm~\ref{algo:KRK-like} given below.
These constructions are adapted from the ones defined for the GVW algorithm over fields \cite{Gao:2010:GVW}.
The first definition is that of super reducible elements\footnote{The notion corresponds to that of \emph{sig-redundant} elements in~\cite{eder:2011:signature}. In the paper \cite{FV2018}, it was called 1-singular reducible.}.
\begin{definition}
  Let $\GG \subset \II$, and $\bff \in \II$.
  $\bff$ is \emph{super reducible} by $\GG$ if there exists $\bfg \in \GG$ and $t \in \Ter(A)$ such that $\sig(\bff) = t \sig(\bfg)$ and $\lm(t\bfg) = \lm(\bff)$.
\end{definition}

Note that unlike in the case of fields, we do not require that a super reduction is a reduction: the leading monomials match, but the leading coefficients need not match.
However, an element which is super reducible is necessarily $\redpol$-reducible by some G-polynomial of the basis elements.

\begin{proposition}
  \label{prop:super-red-implies-red}
  Let $\GG \subset \II$ be complete, and 
  let $\bff \in \II$.
  If $\bff$ is super reducible by $\GG$, then $\bff$ is $\redpol$-reducible by $\GG$.
\end{proposition}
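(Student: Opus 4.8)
We want to show that if $\GG$ is complete and $\bff \in \II$ is super reducible by $\GG$, then $\bff$ is $\redpol$-reducible by $\GG$. By the definition of super reducibility, there are $\bfg \in \GG$ and $t \in \Ter(A)$ with $\sig(\bff) = t\,\sig(\bfg)$ and $\lm(t\bfg) = \lm(\bff)$. If moreover $\lc(\bfg)$ divides $\lc(\bff)$, then $t\bfg$ (after rescaling $t$ by the appropriate coefficient in $R$) is a strong reducer: we get $\lt(\bff) = t'\lt(\bfg)$ for a term $t'$ with the same monomial part as $t$, and since $\sig(t'\bfg) \simeq \sig(t\bfg) = \sig(\bff)$, condition~(3) of Definition~\ref{weak-sig-reduction} holds with equality (a singular s-reduction, but still an s-reduction). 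So the only real work is the case where $\lc(\bfg) \nmid \lc(\bff)$.

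In that case the idea is to use completeness of $\GG$ to replace $\bfg$ by a G-polynomial whose leading coefficient does divide $\lc(\bff)$. First I would observe that $\bff$ itself is a module element with $\lc(\bff)$ in the place where $\bfg$ has $\lc(\bfg)$; more precisely, I want to find a second basis element $\bfg' \in \GG$ (or use $\bff$ suitably) so that $\gcd(\lc(\bfg),\lc(\bfg'))$ divides $\lc(\bff)$ and the G-polynomial $\GPol(\bfg,\bfg')$ is a strong reducer of $\bff$. The cleanest route: consider the G-polynomial $\bfq = \GPol(\bfg, \bff)$ — wait, $\bff$ need not be in $\GG$. Instead I would argue directly on leading coefficients. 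Let $a = \lc(\bfg)$ and $c = \lc(\bff)$, and let $g = \gcd(a,c)$. Because $\lm(t\bfg) = \lm(\bff)$, if we knew $g \mid c$ and had a reducer in $\GG$ with leading coefficient $g$ at the right monomial and signature, we would be done — and $g \mid c$ is automatic. The point of completeness is to manufacture such a reducer: among the basis elements there must be enough leading-coefficient information to reach a gcd dividing $c$, because $\langle \lc(\bfg_i)\mu_i \text{-data}\rangle$ generates enough. Here I would invoke the standard fact (true for strong GB over PIDs) that in a complete set $\GG$, for any monomial $\mu$, the set of leading coefficients of basis elements whose leading monomial divides $\mu$ is closed under gcd up to reduction — precisely, some G-polynomial has leading term dividing $\lt(\bff)$ with leading coefficient $\gcd$ of the relevant $\lc$'s, and by completeness that G-polynomial is itself s-reducible modulo $\GG$, giving a genuine basis reducer.

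Concretely, the key steps in order: (1) reduce to the non-dividing case as above; (2) from super reducibility get $\bfg$ with $\sig(\bff) = t\sig(\bfg)$, $\lm(t\bfg) = \lm(\bff)$; (3) form $\bfh = \GPol(\bfg, \bfg)$ — no; rather, since $\bff - (\text{scaled }t\bfg)$ has strictly smaller leading monomial... that's the wrong direction. Let me restate (3): look at the module element $t\bfg$ and $\bff$; both have leading monomial $\lm(\bff)$, with signatures $\simeq$. Their difference (after matching nothing) is not directly a G-polynomial. The correct move is to note $\lc(\bff) \in \langle \lc(\bfg) \rangle + (\text{contributions of other reducers at that monomial})$ via the strong GB property, pick Bézout coefficients, and let $\bfq = \GPol_{u,v}(\bfg,\bfg')$ be the resulting G-polynomial with $\lt(\bfq)$ dividing $\lt(\bff)$, $\sig(\bfq) \simeq \sigS(\bfg,\bfg')$, and crucially $t_\bfq\sig(\bfq) \preceq \sig(\bff)$ — this last inequality is where I expect the difficulty, since a priori the G-polynomial of two other basis elements could have a large signature. (4) Use completeness: $\bfq$ is s-reducible modulo $\GG$, and chasing the signature through that reduction, the resulting reducer $\bfg'' \in \GG$ satisfies $\lt(\bff) = t''\lt(\bfg'')$ and $t''\sig(\bfg'') \preceq \sig(\bff)$, i.e. $\bff$ is $\redpol$-reducible.

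\textbf{Main obstacle.} The delicate point is controlling the signature of the G-polynomial(s) used to build the reducer: super reducibility only tells us $\lm(t\bfg) = \lm(\bff)$ and pins down $\sig(t\bfg)$, but to invoke completeness we may need a G-polynomial of $\bfg$ with \emph{another} element $\bfg'$ of $\GG$, and there is no reason a priori that $t_{\bfg'}\sig(\bfg')$ is $\preceq \sig(\bff)$. I expect the resolution is that we do not need an arbitrary G-polynomial but specifically one whose signature is governed by $t\sig(\bfg) = \sig(\bff)$ — because the relevant G-polynomial can be taken with the $\bfg$-side term equal to (a multiple of) $t$, so its signature is $\simeq \sig(\bff)$ by construction of $\GPol$ as the non-singular G-polynomial (Proposition~\ref{prop:G-pol-nonsing}), and completeness then only ever decreases or preserves the signature along the s-reduction. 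Making that bookkeeping precise — in particular that the reducer extracted from completeness still has signature $\preceq \sig(\bff)$ after possibly several s-reduction steps — is the crux, and it follows from the observation recorded after Definition~\ref{weak-sig-reduction} that s-reductions never raise the signature.
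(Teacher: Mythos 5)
Your easy case ($\lc(\bfg) \mid \lc(\bff)$) matches the paper's first branch, and you have correctly identified the crux of the hard case: to invoke completeness you need a G-polynomial of $\bfg$ with some second element $\bfg'$, and nothing obviously bounds the contribution of $\bfg'$ to the signature of that G-polynomial. But your proposed resolution does not close the gap. You suggest that $\sig(\GPol(\bfg,\bfg'))$ is ``governed by $t\sig(\bfg)$'' because the $\bfg$-side multiplier is a multiple of $t$; this is false in general, since $\sig(\GPol(\bfg,\bfg')) \simeq \sigS(\bfg,\bfg') = \max(t_{\bfg}\sig(\bfg), -t_{\bfg'}\sig(\bfg'))$ and the $\bfg'$-side can dominate. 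You also never actually construct $\bfg'$: the appeal to a ``strong GB property'' is not available, as the only hypothesis on $\GG$ is that it is complete.

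The move you briefly consider and discard --- looking at $\bff$ and $t\bfg$, both with leading monomial $\lm(\bff)$ --- is in fact the key step, but not for the reason you tested; you noted the difference is not a G-polynomial and moved on. The paper's observation is rather this: keep $t_1$ exactly as super reducibility pins it down, coefficient included, so that $\sig(\bff) = t_1\sig(\bfg_1)$ holds on the nose. Then $\bff - t_1\bfg_1$ has signature strictly smaller than $\sig(\bff)$ while retaining leading monomial $\lm(\bff)$ (the leading coefficients do not cancel in the non-dividing case). A minimal-counterexample argument then produces a $\redpol$-reducer $\bfg_2$ of $\bff - t_1\bfg_1$, and because that reducer acts on a strictly smaller-signature element, its signature contribution $t_2\sig(\bfg_2)$ is automatically $\precneq t_1\sig(\bfg_1)$. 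This is exactly the $\bfg'$ you were missing, with exactly the bound you needed: since $\lt(\bff) = t_1\lt(\bfg_1) + t_2\lt(\bfg_2)$, the G-polynomial $\bfg_3 = \GPol(\bfg_1,\bfg_2)$ has leading term dividing $\lt(\bff)$, its signature is dominated by the $\bfg_1$-side (so $\simeq \sig(\bff)$ after scaling), and completeness gives a $\redpol$-reducer of $\bfg_3$, which is a $\redpol$-reducer of $\bff$.
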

\begin{proof}
  Assume for a contradiction that $\bff$ is super reducible by $\GG$, not $\redpol$-reducible, and has minimal signature for this property.
  Let $\bfg_{1}$ be such that there exists $t_{1} \in \Ter(A)$ with $\sig(\bff) = t_{1} \sig(\bfg_{1})$ and $\lm(t_{1}\bfg_{1}) = \lm(\bff)$.
  In particular, $\sig(\bff-t_{1}\bfg_{1}) \precneq \sig(\bff)$.
  If $\lt(t_{1}\bfg_{1}) = \lt(\bff)$, then $\bfg_{1}$ is a $\redpol$-reducer of $\bff$.
  
  Otherwise, $\lm(\bff - t_{1}\bfg_{1}) = \lm(\bff)$.
  Since $\sig(\bff-t_{1}\bfg) \precneq \sig(\bff)$, by minimality of $\sig(\bff)$, $\bff - t_{1}\bfg$ is $\redpol$-reducible modulo $\GG$.
  Let $\bfg_{2}$ be such a reducer, with $\lt(\bff-t_{1}\bfg_{1}) = t_{2}\bfg_{2}$, and so $\lt(\bff) = t_{1}\lt(\bfg_{1}) + t_{2}\lt(\bfg_{2})$.
  The signature satisfies $t_{2}\sig(\bfg_{2}) \preceq \sig(\bff - t_{1}\bfg_{1}) \precneq \sig(t_{1}\bfg_{1})$.
  Let $\bfg_{3} = \GPol(\bfg_{1},\bfg_{2})$, by definition of the G-pol there exists $t_{3} \in \Ter(A)$ such that $t_{3}\lt(\bfg_{3}) = \lt(\bff)$, and $t_{3}\sig(\bfg_{3}) \simeq t_{1}\sig(\bfg_{1}) \simeq \sig(\bff)$.
  By hypothesis $\bfg_{3}$ is $\redpol$-reducible by $\GG$, and a $\redpol$-reducer of $\bfg_{3}$ is a $\redpol$-reducer of $\bff$.
\end{proof}

The last definition is that of the covered property, which can replace reduction to 0 as a correctness criterion for signature algorithms.
The advantage is twofold: it makes the proof of correctness more straightforward (as can be seen on the proofs of Th.~\ref{prop:correctness-KRK}, which uses the cover criterion, and of Th.~\ref{prop:correctness-Lichtblau}, which does not); and since it is a weaker condition than reduction to 0, it allows to discard more pairs in the course of the algorithm.
\begin{definition}
  Let $(\bff_{1},\bff_{2}) \in \II^{2}$ be a pair.
  Let $\GG \subset \II$ and $\GG_{z} \subset \Syz(\II)$.
  The pair $(\bff_{1},\bff_{2})$ is \emph{covered by $(\GG,\GG_{z})$} if there exists $\bfg \in \GG$,
  $\bfz \in \GG_{z}$, $t,t^{(z)} \in \Ter(A)$ such that
  \begin{itemize}
    \item if $t \neq 0$, $\sig(\bff,\bfg) \simeq t\sig(\bfg)$;
    \item if $t^{(z)}\neq 0$, $\sig(\bff,\bfg) \simeq t^{(z)}\sig(\bfz)$;
    \item $\sig(\bff,\bfg) = t\sig(\bfg) + t^{(z)}\sig(\bfz)$;
    \item $\lm(t\bfg) < \lmS(\bff,\bfg)$.
  \end{itemize}
\end{definition}

This cover criterion looks more complicated to implement than in the case of fields, due to the need to consider linear combinations.
However, one can use $\sigG$-combinations of elements of $\GG$ and elements of $\GG_{z}$ to compute elements with signature as small as possible and same leading monomial, and reduce the cover test to a single divisibility test.

\section{Adding signatures to Kandri-Rody and Kapur's algorithm}
\label{sec:moellers-algorithm}

\subsection{Description of the algorithm}
\label{sec:descr-algor}

The first algorithm which we present in this paper is a signature-enabled version of Kandry-Rody and Kapur's algorithm. The algorithm works similarly to Buchberger's algorithm, but adds both S- and G-polynomials to the basis.
The signature variant follows the construction of the GVW algorithm~\cite{Gao:2010:GVW, Gao-2015-new-framework-for}.
An example run of the algorithm, as well as the one in the next section, is available online\footnote{\url{https://gitlab.com/thibaut.verron/signature-groebner-rings/-/raw/public/ISSAC21_example.pdf}}.

The correctness of the algorithm is stated by the following theorem (proved in Section~\ref{sec:proof}), and adapted from \cite[Thm.~2.4]{Gao-2015-new-framework-for}.
\begin{theorem}
  \label{prop:correctness-KRK}
  Let $\GG \subset \II$ complete , $\GG_{z} \subset \Syz(\II)$ $\sigG$-complete, such that for all signatures $\bfT$, there exists some $\bfg \in \GG \cup \GG_{z}$ such that $\sig(\bfg)$ divides $\bfT$.
  Assume that every regular pair of elements of $\GG$ is covered by $(\GG,\GG_{z})$.
  Then,
    $\GG$ is a Sig-Gröbner basis and 
    $\GG_{z}$ is a Sig-basis of syzygies.
\end{theorem}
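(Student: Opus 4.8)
The plan is to prove the statement by a ``least criminal'' argument on signatures, exactly in the spirit of Proposition~\ref{prop:syz-basis} and the GVW correctness proof. Concretely, I would show that under the stated hypotheses every element $\bfh \in \II$ admits a standard Sig-representation with respect to $(\GG,\GG_z)$; the Proposition relating standard Sig-representations to Sig-GB's then finishes both conclusions at once. So suppose, for contradiction, that some element of $\II$ has no standard Sig-representation, and among all such elements pick $\bfh$ with $\sig(\bfh)$ minimal (well-foundedness of $\prec$ on the quotient by $\simeq$ makes this legitimate).

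The first reduction step is to replace $\bfh$ by an element that has the ``expected'' signature, \emph{i.e.}, one for which $\sig(\bfh)$ is actually attained by a single generator multiple. Using the hypothesis that for every signature $\bfT$ there is $\bfg \in \GG \cup \GG_z$ with $\sig(\bfg) \divides \bfT$, I would pick such a $\bfg$ for $\bfT = \sig(\bfh)$, write $t\sig(\bfg) = \sig(\bfh)$, and consider $\bfh' = \bfh - c t\bfg$ with $c \in R$ chosen so that the signature term cancels — here is where the coefficient in the signature matters, and where I must be careful: if $\lc(\sig(\bfg))$ does not divide $\lc(\sig(\bfh))$ one does not get full cancellation, but by $\sigG$-completeness of $\GG_z$ (and completeness of $\GG$) one can always produce an element of $\GG \cup \GG_z$ whose signature has the gcd coefficient, so the subtraction lowers $\sig(\bfh)$ strictly. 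Then $\bfh'$ has smaller signature, hence admits a standard Sig-representation by minimality; if $\bfg \in \GG_z$ then $\bfh = \bfh' + ct\bfg$ is again a standard Sig-representation (append the syzygy term, respecting conditions~\ref{item:sigrep:sig}--\ref{item:sigrep:sigsyz}), contradiction — so we may assume $\bfg \in \GG$ and $\lt(\bfg)$ genuinely contributes, giving $\lt(t\bfg) \preceq$-compatibly a candidate leading term for $\bfh$.

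The heart of the argument is then the induction on the \emph{leading term} $\lt(\bfh)$ (with signature fixed), to patch together the polynomial parts. Having $t\bfg$ with $t\sig(\bfg) \simeq \sig(\bfh)$ and $\lm(t\bfg) = \lm(\bfh)$, either $\lt(t\bfg) = \lt(\bfh)$, in which case $\bfh - t\bfg$ has strictly smaller leading term (and signature $\preceq \sig(\bfh)$), so by a nested induction on $\lt$ it has a standard Sig-representation which extends to one of $\bfh$ — contradiction; or $\lm(t\bfg) = \lm(\bfh)$ but the leading coefficients differ. In the latter case I invoke completeness of $\GG$: the relevant G-polynomial $\GPol$ of $\bfg$ and another basis element with the appropriate signature is s-reducible modulo $\GG$, and Proposition~\ref{prop:super-red-implies-red} together with the \emph{cover hypothesis} on regular pairs lets me rewrite the top of $\bfh$ using basis elements whose signature-multiples are $\preceq \sig(\bfh)$ and whose leading monomial is strictly smaller than $\lmS$, thereby decreasing $\lt(\bfh)$ again. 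Iterating the nested induction down to $\lt = 0$ shows $\bfh$ is a syzygy handled by $\GG_z$ via sig-reducibility, in all cases contradicting the choice of $\bfh$.

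I expect the main obstacle to be the bookkeeping in the nested induction when the pair involved is \emph{singular} rather than regular: the cover hypothesis is only assumed for regular pairs, so singular S-polynomials must be dealt with separately — precisely via the $\sigG$-combinations added to $\GG_z$, which ensure that the ``missing'' signature information is captured as a syzygy. Making the two inductions (outer on $\sig$, inner on $\lt$ at fixed $\sig$) interlock cleanly, and checking that each rewriting step genuinely respects all four conditions of Definition~\ref{standard-sig-representation} (in particular the ordering condition~\ref{item:sigrep:lt} on successive leading terms and condition~\ref{item:sigrep:sigpol} bounding signatures by $\sig(\bfh)$), is the delicate part; the coefficient-level arguments (gcd's, Bézout coefficients) are routine given Proposition~\ref{prop:G-pol-nonsing} and $\sigG$-completeness.
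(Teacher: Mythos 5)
Your proposal adopts the right overall shape (least criminal on signatures, case split between syzygy and non-syzygy, appeal to the divisibility hypothesis to produce a witness with the correct top signature), and the idea of routing through standard Sig-representations and the Proposition relating them to Sig-GB's is a legitimate alternative to the paper's direct s-reducibility argument. However, there are two genuine gaps.

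First, a small one: the hypothesis already says $\sig(\bfg)$ \emph{divides} $\bfT$ as a \emph{term}, so the coefficient $\lc(\sig(\bfg))$ divides $\lc(\bfT)$ by definition of term divisibility. Your worry about needing $\sigG$-completeness to repair a failed coefficient cancellation at this step is a non-issue and suggests a misreading of the hypothesis.

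Second, and more seriously: you silently assume that once you have $\bfg \in \GG$ with $t\sig(\bfg) = \sig(\bfh)$, you also have $\lm(t\bfg) = \lm(\bfh)$. Nothing guarantees this, and the case $\lm(t\bfg) \neq \lm(\bfh)$ is precisely where the cover hypothesis must enter — and it enters in a way your sketch does not describe. The paper does not use covering to ``rewrite the top of $\bfh$''. Instead, it considers \emph{all} decompositions $\sig(\bfh) = t\sig(\bfg_1) + t_z\sig(\bfz_1)$ with $\bfg_1 \in \GG$, $\bfz_1 \in \GG_z$, and picks one minimizing $\lm(t\bfg_1)$; the cover criterion on the regular pair $(\bfg_1, \bfg_2)$ (where $\bfg_2$ is a putative regular reducer of $t\bfg_1$) then produces a decomposition with strictly smaller $\lm$, contradicting that minimality and establishing that $t\bfg_1$ is \emph{not} regular $\redpol$-reducible. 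This non-reducibility of $t\bfg_1$, combined with the outer minimality of $\sig(\bfh)$ applied to $\bfh - t\bfg_1 - t_z\bfz_1$, is what forces the contradiction. Your nested induction on $\lt(\bfh)$ at fixed signature is not an adequate substitute: without the $\lm$-minimization over decompositions you have no handle on why the cover hypothesis should produce a smaller leading monomial for $t\bfg_1$ rather than for $\bfh$, and without the non-reducibility claim you cannot close the case $\lm(t\bfg) \neq \lm(\bfh)$. You should replace the ``rewrite the top'' step with this minimization-then-contradiction argument, after which the rest of your outline (subtracting, recursing, handling the syzygy case via $\sigG$-combinations) goes through.
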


The algorithm ensures that all the assumptions 
hold:
\begin{itemize}
  \item $\GG$ is complete and $\GG_{z}$ is $\sigG$-complete because G-polynomials are added to the queue of pairs to be reduced for addition into $\GG$, and $\sigG$-combinations to $\GG_{z}$;
  \item there exists, for all $\bfT$, a $\bfg$ with $\sig(\bfg)$ dividing $\sig(\bfT)$, because we process all elements $(\bfe_{i},f_{i})$, thus ensuring that there is an element with signature $\bfe_{i}$ in either $\GG$ or $\GG_{z}$ for all $i$;
  \item every regular pair is covered by $(\GG,\GG_{z})$ because, for each regular pair, we compute the corresponding S-polynomial, reduce it and add the result to the basis, thus creating a covering element for the pair.
\end{itemize}

The resulting algorithm is described in Algorithm~\ref{algo:KRK-like}.
Note that for each element $\bff=\sppair{\alpha} \in \II$, we only keep track of  $\sig(\bff) = \lt(\alpha)$ and $\bar{\alpha}$.
The routines $\mathsf{SigReduce}$ and $\mathsf{RegularReduce}$ compute the sig-reduction of a signature by a basis of syzygies (the result being either $0$ or the signature itself), and the regular reduction of an element of $\II$ by a Sig-GB, respectively.

A technical point is that the theorem allows to eliminate S-polynomials obtained from a pair which is covered, but not necessarily G-polynomials.
This requires to keep track of how each element was computed.
In the pseudo-code algorithm, we do it by keeping for each new element its so-called \textsf{type}, which can take three values: \textsf{N}, indicating a polynomial from the input; $\mathsf{S}(i,j)$, indicating the S-polynomial of $\bfg_{i}$ and $\bfg_{j}$; and $\mathsf{G}(i,j)$, indicating the G-polynomial of $\bfg_{i}$ and $\bfg_{j}$.

On top of that, we add some tests to eliminate some G-polynomials.
Firstly, if $\lc(\bfg_{i})$ divides $\lc(\bfg_{j})$, then one can choose the Bézout coefficients such that $\GPol(\bfg_{i},\bfg_{j})$ is a multiple of $\bfg_{i}$, and thus it is automatically s-reducible modulo $\GG$.

Secondly, thanks to Proposition~\ref{prop:syz-basis}, we know that we can immediately disregard any element whose signature is divisible by that of a syzygy.
This partially extends the cover criterion to G-polys.

Thirdly, note that we cannot use Proposition~\ref{prop:super-red-implies-red} to eliminate G-polynomials which would be super reducible: indeed, that proposition requires that $\GG$ be complete, and the G-polynomial being processed might be necessary for that.
Furthermore, the G-polynomial $\GPol(\bfg_{i},\bfg_{j})$ is always super reducible by at least one of $\bfg_{i}, \bfg_{j}$.

However, if $\bff \in \II$ is both super reducible and s-reducible by $\bfg \in \GG$, that is, if there exists $t \in \Ter(A)$ such that $t \sig(\bfg) = \sig(\bff)$ and $t \lt(\bfg) = \lt(\bff)$, then the proof of Proposition~\ref{prop:super-red-implies-red} shows that $\bff$ is s-reducible by $\GG$, without any hypothesis of completeness.
Thus such an element can be immediately discarded.

For some signature orderings, it is also possible to predict in advance the signature of some syzygies, with the F5 criterion.
This criterion can be implemented in our setting exactly as in the case of fields, for instance by adding signatures to $\GG_{z}$, and we do not detail it here.

\subsection{Proof}
\label{sec:proof}
The proof of Theorem~\ref{prop:correctness-KRK} is adapted from the proof of \cite[Thm.~2.4]{Gao-2015-new-framework-for}.


\begin{proof}[Proof of Th.~\ref{prop:correctness-KRK}]
  We prove the implication by contradiction.
  Assume that there exists $\bff \in \II$ such that $\bff$ is not s-reducible modulo $\GG$ and $\bff$ is not sig-reducible modulo $\GG_{z}$, and pick $\bff$ with minimal signature $\bfT$ for this property.
  Let $\bfg_{1} \in \GG$, $\bfz_{1} \in \GG_{z}$, $t,t_{z_{1}} \in \Ter(A)$ such that $\bfT = t\sig(\bfg_{1}) + t_{z_{1}}\sig(\bfz_{1})$ and such that $\lm(t\bfg_{1})$ is minimal for that property.
  By hypothesis, such a decomposition exists (with either $t_{1}=0$ or $t_{z_{1}}=0$).

  First, we prove that $t\bfg_{1}$ is not regular $\redpol$-reducible modulo $\GG$.
  Indeed, if it were, let $\bfg_{2}$ be a regular reducer of $t\bfg_{1}$.
  Consider the pair $(\bfg_{1},\bfg_{2})$, let $\mu = \lcmlm(\bfg_{1},\bfg_{2})$ and let $\sigma = \sigS(\bfg_{1},\bfg_{2})$.
  By properties of the lcm, there exist some terms $t_{1},t_{2}$ such that $\mu= \lm(t_{1}\bfg_{1})$, $\sigma = t_{1} \sig(\bfg_{1})$, and $t_{1}t_{2} = t$.
  Furthermore, since $\bfg_{2}$ is a regular reducer of $t\bfg_{1}$, the pair $(\bfg_{1},\bfg_{2})$ is regular (\cite[Lemma.~2.3]{Gao-2015-new-framework-for}).
    
  By assumption, the pair is covered by $(\GG,\GG_{z})$,
  so there exists $\bfg_{3} \in \GG$, $\bfz_{2} \in \GG_{z}$, $t_{3},t_{z_{2}} \in \Ter(A)$,
  such that $\sigma = t_{3}\sig(\bfg_{3}) + t_{z_{2}}\sig(\bfz_{2})$,
  and $\lm(t_{3}\bfg_{3}) < \mu$.
  So all in all, $\bfT = t_{2}t_{3}\sig(\bfg_{3}) + t_{z_{1}}\sig(\bfz_{1}) + t_{z_{2}}\sig(\bfz_{2})$,
  and $\lm(t_{2}t_{3}\bfg_{3}) < \lm(t_{1}t_{2}\bfg_{1}) = \lm(t\bfg_{1})$.
  Let $\bfz_{3}$ be  $\sigGcomb(\bfz_{1},\bfz_{2})$,
  its signature divides the sum $t_{z_{1}}\sig(\bfz_{1}) + t_{z_{2}}\sig(\bfz_{2})$, and thus, the existence of  $(\bfg_{3},\bfz_{3})$ contradicts the minimality of $\lm(g_{1})$.
  
  So $t \bfg_{1}$ is not regular $\redpol$-reducible modulo $\GG$.
  Now we consider two distinct cases, depending on whether $\bff$ is a syzygy or not.

  If $\bff$ is not a syzygy, $\lm(\bff) \neq \lm(t\bfg_{1})$, because otherwise $\bff$ would be super reducible by $\bfg_{1}$ and thus, since $\GG$ is complete,  s-reducible by $\GG$.
  Let $\bff_{1} = \bff - t\bfg_{1} - t_{z_{1}}\bfz_{1}$, so $\sig(\bff_{1}) \precneq \bfT$, and $\lt(\bff_{1}) = \max(\lt(\bff), t\lt(\bfg_{1}))$.
  Since $\sig(\bff_{1}) \precneq \bfT$, by minimality of $\bff$, $\bff_{1}$ s-reduces to 0 modulo $\GG$. 
  But since $\lt(\bff_{1}) = \max(\lt(\bff), t\lt(\bfg_{1}))$, any $\redpol$-reduction of $\bff_{1}$ is a regular reduction of either $\bff$ or $t\bfg_{1}$, which is a contradiction.

  If $\bff$ is a syzygy, we proceed similarly, but now $\lm(\bff)=0$.
  So the fact that $\bff_{1}$ is s-reducible implies that $t\bfg_{1}$ must be regular reducible, which is impossible.
  So $t\bfg_{1}=0$ and $\bff$ is sig-reducible by $\bfz_{1}$.
\end{proof}

\section{Adding signatures to Pan/Lichtblau's algorithm}
\label{sec:PanL-algor}

\subsection{Description of the algorithm}
\label{sec:Descr-algor-1}

The second algorithm which we present is adapted from that of Lichtblau~\cite{Lichtblau}, which is itself adapted from that of Pan~\cite{Pan:Dbases}.
The main difference with the previous algorithm is that it tries to limit the growth of the length of the queue 
by adding at most one new polynomial for each pair, either an S- or a G-polynomial, by the following construction.

\begin{definition}
  Let $\bff,\bfg \in \II$.
  The SG-polynomial 
  of $\bff$ and $\bfg$ is
  \begin{equation}
    \label{eq:6}
    \SGPol(\bff_{1},\bff_{2}) =
    \begin{cases}
      \SPol(\bff,\bfg) & \text{if } \lc(\bff) \divides \lc(\bfg) \text{ or } \lc(\bfg) \divides \lc(\bff)\\
      \GPol(\bff,\bfg) & \text{otherwise}.
    \end{cases}
  \end{equation}
\end{definition}

The reason why this construction is sufficient is that if $\lc(\bff)$ and $\lc(\bfg)$ do not divide each other, then the S-polynomial of $\bff$ and $\bfg$ can be expressed in terms of the S-polynomials of $\bff$, $\bfg$ and $\bfh=\GPol(\bff,\bfg)$.
However, it forces us to also compute non-regular S-polynomials as long as they are non-singular: indeed, $\sigS(\bff,\bfg) \simeq \sig(\bfh)$, and $\sig(\bfh) \simeq \sigS(\bff,\bfh) \simeq \sigS(\bfg,\bfh)$ because $\lm(\bfh)$ is equal up to coefficient to the term degree of $\SPol(\bff,\bfg)$, $\SPol(\bff,\bfh)$ and $\SPol(\bfg,\bfh)$.
The element $\SPol(\bff,\bfg)$ is a linear combination of $\SPol(\bff,\bfh)$ and $\SPol(\bfg,\bfh)$, and this combination cannot be regular.

Compared to the previous algorithm, considering SG-polynomials seems to lead to to computing fewer S-polynomials and the exact same G-polynomials, since it is always useless to compute a G-polynomial when one of the leading coefficients divides the other.
However, allowing non-regular S-polynomials means computing more S-polynomials, and there is no straightforward theoretical comparison between the algorithms.
Allowing non-regular S-polynomials also means that we cannot \emph{a priori} use the cover criterion in our algorithm: the algorithm would not ensure that all regular pairs are covered, but rather, only those for which the SG-polynomial is actually an S-polynomial.
We can however eliminate S-polynomials which are super reducible, since Proposition~\ref{prop:super-red-implies-red} ensures that they s-reduce to 0.

The rest of the algorithm, including the processing of syzygies, is done in exactly the same way as in Algorithm~\ref{algo:KRK-like}.
%
%
The correctness of the algorithm is ensured by the following theorem, which will be proved in Section~\ref{sec:Proof-1}.
\begin{theorem}
  \label{prop:correctness-Lichtblau}
  Let $\GG 
  \subset \II$ complete and $\GG_{z} \subset \Syz(\II)$ $\sigG$-complete such that 
  \begin{itemize}
    \item $\forall\,i \in \intint{1,\npols}$, $(\bfe_{i},f_{i})$ has a standard Sig-rep. w.r.t. $(\GG,\GG_{z})$
    ;
    \item any non-singular SG-polynomial of elements of $\GG$ 
  has a standard Sig-representation w.r.t. $(\GG,\GG_{z})$.
\end{itemize}
Then $\GG$ is a Sig-Gröbner basis and $\GG_{z}$ is a Sig-basis of syzygies.
\end{theorem}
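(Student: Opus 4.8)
The plan is to follow the same ``least criminal'' strategy as in the proof of Theorem~\ref{prop:correctness-KRK}, but since the cover criterion is no longer available we instead chase explicit standard Sig-representations. Concretely, suppose for contradiction that some $\bff \in \II$ is neither s-reducible modulo $\GG$ nor sig-reducible modulo $\GG_z$, and pick $\bff$ of minimal signature $\bfT$ with this property. As in the previous proof, using that $\GG$ is complete and $\GG_z$ is $\sigG$-complete and that signatures of $\GG \cup \GG_z$ cover everything (which now comes from the first bullet hypothesis together with a minimal-signature induction applied to each $(\bfe_i,f_i)$), we find $\bfg_1 \in \GG$, $\bfz_1 \in \GG_z$, terms $t, t_{z_1}$ with $\bfT = t\sig(\bfg_1) + t_{z_1}\sig(\bfz_1)$, and we choose such a decomposition with $\lm(t\bfg_1)$ minimal.

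First I would show that $t\bfg_1$ is not regular $\redpol$-reducible modulo $\GG$. If it were, with regular reducer $\bfg_2$, then by \cite[Lem.~2.3]{Gao-2015-new-framework-for} the pair $(\bfg_1,\bfg_2)$ is regular, so in particular non-singular, and the SG-polynomial $\SGPol(\bfg_1,\bfg_2)$ is defined. Here is where the argument diverges from the KRK proof: $\SGPol(\bfg_1,\bfg_2)$ might be the \emph{G}-polynomial $\bfh = \GPol(\bfg_1,\bfg_2)$ rather than the S-polynomial. In either case, using the hypothesis that every non-singular SG-polynomial of elements of $\GG$ has a standard Sig-representation w.r.t.\ $(\GG,\GG_z)$, I would expand that representation (equation~\eqref{eq:19}), multiply it by the appropriate cofactor $t_2$ so that the leading monomial lands on $\mu = \lcmlm(\bfg_1,\bfg_2)$, and extract from the leading terms of the representation some $\bfg_3 \in \GG$ with $\lm(t_2 t_3 \bfg_3) < \lm(t\bfg_1)$ and $t_3\sig(\bfg_3) \preceq \sigS(\bfg_1,\bfg_2)$, together with syzygy contributions whose signatures are $\preceq \sigS(\bfg_1,\bfg_2)$. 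Combining these syzygy contributions and $\bfz_1$ via $\sigG$-completeness of $\GG_z$ into a single $\bfz_3$ whose signature divides the relevant sum, I get a new decomposition $\bfT = (t_2 t_3)\sig(\bfg_3) + (\text{mult.\ of }\sig(\bfz_3))$ with strictly smaller $\lm$, contradicting minimality. (When $\SGPol$ is the G-polynomial the leading-term bookkeeping is slightly different but the same conclusion $\lm(\cdot) < \mu$ holds because the term degree of $\GPol(\bfg_1,\bfg_2)$ equals $\mu$ up to a unit and the standard representation has all polynomial cofactors with strictly smaller leading monomial than the whole, except the first which equals it.)

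With $t\bfg_1$ not regular $\redpol$-reducible, I split on whether $\bff$ is a syzygy. If $\bff$ is not a syzygy: $\lm(\bff) \ne \lm(t\bfg_1)$, since otherwise $\bff$ is super reducible by $\bfg_1$, hence (completeness of $\GG$, Prop.~\ref{prop:super-red-implies-red}) $\redpol$-reducible, contradiction. Then $\bff_1 := \bff - t\bfg_1 - t_{z_1}\bfz_1$ has $\sig(\bff_1) \precneq \bfT$ and $\lt(\bff_1) = \max(\lt(\bff), t\lt(\bfg_1))$; by minimality $\bff_1$ s-reduces to $0$ modulo $\GG$, and that reduction chain must regular-reduce either $\bff$ or $t\bfg_1$ at the top step, contradiction either way. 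If $\bff$ is a syzygy: the same manipulation with $\lm(\bff)=0$ forces $t\bfg_1 = 0$, hence $\bfT = t_{z_1}\sig(\bfz_1)$ and $\bff$ is sig-reducible by $\bfz_1$, contradiction. This completes the proof.

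The main obstacle I anticipate is the leading-term bookkeeping in the non-regular case of Step~1, i.e.\ correctly handling the two branches of $\SGPol$ (and in the G-polynomial branch, the fact that the standard representation is of $\GPol(\bfg_1,\bfg_2)$ rather than of the S-polynomial directly) while still extracting a $\bfg_3$ with strictly smaller $\lm(t_2 t_3\bfg_3)$; one must be careful that the cofactor $t_2$ and the indices from the representation multiply out so that the new decomposition of $\bfT$ genuinely decreases the minimized quantity. Everything else is a routine transcription of the field-case and KRK-case arguments.
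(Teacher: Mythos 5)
Your proposal tries to transplant the GVW/cover-style argument from the proof of Theorem~\ref{prop:correctness-KRK} into the Pan/Lichtblau setting, substituting ``SG-polynomials have standard Sig-representations'' for the cover criterion. This has a genuine gap in the case $\SGPol(\bfg_1,\bfg_2)=\GPol(\bfg_1,\bfg_2)$, which is precisely the case you flag as the ``main obstacle'' but then wave away. The cover criterion you need to reproduce requires $\bfg_3$ with $\lm(t_3\bfg_3) \precneq \lmS(\bfg_1,\bfg_2) = \mu$. When $\SGPol$ is the S-polynomial this follows, because the top terms cancel, so the first cofactor of its standard Sig-representation has leading monomial $< \mu$. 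But $\lt(\GPol(\bfg_1,\bfg_2))$ has leading monomial exactly $\mu$ (only the coefficient drops to the gcd), so the first term of its standard Sig-representation has $\lm = \mu$, giving $\lm(t_2 t_3\bfg_3) = \lm(t\bfg_1)$ with no strict decrease — the minimality argument does not conclude. Your parenthetical even states that the first cofactor ``equals'' $\mu$, which is inconsistent with the claimed conclusion $\lm(\cdot) < \mu$. This is not a bookkeeping wrinkle; it is the reason Section~4 of the paper explicitly notes that the algorithm does not ensure all regular pairs are covered, only those whose SG-polynomial is an S-polynomial, and therefore the cover criterion cannot be used \emph{a priori} in this setting.

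The paper avoids the issue by taking a fundamentally different, Buchberger/Pan-style route. Rather than decomposing the minimal counterexample's signature as $t\sig(\bfg_1) + t_{z_1}\sig(\bfz_1)$ and reasoning about cover, it takes a full standard Sig-representation $\bfh = \sum_u \tau_u \bfg_{i_u} + \sum_v t^{(z)}_v \bfz_{j_v}$, minimizes $\lm(\tau_1\bfg_{i_1})$ and secondarily the number of top terms achieving that monomial, and then shows that if $\lm(\tau_1\bfg_{i_1}) = \lm(\tau_2\bfg_{i_2})$ one can rewrite. The rewriting uses completeness of $\GG$ to produce a common reducer $\bfg_{i_3}$ with $\lc(\bfg_{i_3})$ dividing both $\lc(\bfg_{i_1})$ and $\lc(\bfg_{i_2})$; then $\SGPol(\bfg_{i_1},\bfg_{i_3})$ and $\SGPol(\bfg_{i_2},\bfg_{i_3})$ are genuine S-polynomials, and a technical Lemma~\ref{lemme:1} (covering possibly singular combinations of two basis elements whose leading coefficients divide one another) lets you substitute and strictly decrease the minimized quantity. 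Notice the key maneuver: the G-polynomial hypothesis is used only to create a common divisor $\bfg_{i_3}$, after which everything reduces to the S-polynomial case, and Lemma~\ref{lemme:1} absorbs the singular combinations that your approach has no way to handle. To repair your proof you would need to either prove Lemma~\ref{lemme:1} and switch to this decomposition-based argument, or find an independent argument that the algorithm really does cover all regular pairs — which the paper argues it does not.
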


\vspace{-0.2cm}
\subsection{Proof}
\label{sec:Proof-1}

The proof of Theorem~\ref{prop:correctness-Lichtblau} is adapted from that of \cite{Pan:Dbases} and \cite {Lichtblau} with
the addition of signatures.
First, we prove a useful technical lemma.

\begin{lemma}
  \label{lemme:1}
   Let $\mathcal{G} = \{\bfg_{1},\dots,\bfg_{r}\} \subset \II$ and $\bfT \in \Ter(\mathbf{M})$ 
   such that
   \begin{itemize}
     \item for all $\bfg \in \II$ with $\sig(\bfg) \precneq \bfT$, $\bfg$
     $\redpol$-reduces to 0 modulo $\GG$;
    \item for all $\bfg_{i}, \bfg_{j} \in \GG$ such that $\SGPol(\bfg_{i},\bfg_{j})$ is non-singular and $\sigS(\bfg_{i},\bfg_{j}) \preceq \bfT$, $\SGPol(\bfg_{i},\bfg_{j})$ $\redpol$-reduces to $0$ modulo $\GG$.    
  \end{itemize}
  Let $\bfg_{i},\bfg_{j} \in \GG$  
  such that $\lc(\bfg_{j})$ divides $\lc(\bfg_{i})$.
  Then any (possibly singular) linear combination of $\bfg_{i}$ and $\bfg_{j}$ with signature at most $\bfT$ $\redpol$-reduces to $0$ modulo $\GG$.
\end{lemma}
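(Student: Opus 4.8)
The plan is to reduce the general (possibly singular) combination $\bfp = a\bfg_i + b\bfg_j$ (with appropriate monomial multiples built in, so that $\lt(a\bfg_i)$ and $\lt(b\bfg_j)$ are both the term degree of the pair) to a sum of pieces each of which is handled by one of the two hypotheses. First I would dispose of the easy cases: if the combination is regular, then one of the two summands strictly dominates in leading term, and $\bfp$ $\redpol$-reduces by $\bfg_i$ or $\bfg_j$ directly, lowering the signature or leading term, after which we induct (on signature, then leading monomial) using the first hypothesis. If $\lc(\bfg_j) \divides \lc(\bfg_i)$ and the leading terms cancel exactly, then since $\bfg_j$ $\redpol$-reduces $\bfp$, subtracting the appropriate multiple of $\bfg_j$ produces $\bfp'$ with strictly smaller leading monomial and signature $\preceq \bfT$; this is where we appeal to $\SGPol(\bfg_i,\bfg_j) = \SPol(\bfg_i,\bfg_j)$ (legal because one leading coefficient divides the other) being non-singular — so $\sigS(\bfg_i,\bfg_j) \preceq \bfT$ lets us invoke the second hypothesis on $\SPol(\bfg_i,\bfg_j)$, and the remainder $\bfp'$ has smaller signature so the first hypothesis finishes it.

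The heart of the argument is the singular case where the leading terms cancel. Here I would write $\bfp = c\cdot\SPol(\bfg_i,\bfg_j) + (\text{lower-order correction})$: more precisely, after scaling so both $\lt(a\bfg_i) = -\lt(b\bfg_j)$ equal (a multiple of) the term degree, the element $a\bfg_i + b\bfg_j$ differs from a scalar multiple of the S-polynomial only by a combination whose polynomial part has strictly smaller leading monomial, and whose signature is $\preceq \bfT$ (since both $\sig(a\bfg_i)$ and $\sig(b\bfg_j)$ are $\preceq \bfT$ by hypothesis on $\bfp$, and the correction term inherits this). The S-polynomial piece: if the pair is non-singular, its signature is $\sigS(\bfg_i,\bfg_j) \preceq \sig(\bfp) \preceq \bfT$, so the second hypothesis applies; if the pair is singular, then $\sig(\SPol(\bfg_i,\bfg_j)) \precneq \sigS(\bfg_i,\bfg_j)$, and — using the observation after Definition~\ref{strong-S-poly-sig} together with the fact that the signature of the S-polynomial is still $\preceq \bfT$ — the first hypothesis applies directly. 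In every case the correction term has signature $\preceq \bfT$ and, crucially, either strictly smaller signature than $\bfp$ or strictly smaller leading monomial, so an induction on the pair (signature, leading monomial), ordered lexicographically with the well-foundedness of $\prec$ and of the monomial order, closes the argument.

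The main obstacle I anticipate is the bookkeeping in the singular case: one has to be careful that when subtracting $c\cdot\SPol(\bfg_i,\bfg_j)$ from $\bfp$, the resulting element really does have signature $\preceq \bfT$ and is still a combination of $\bfg_i$ and $\bfg_j$ (so the induction hypothesis on pairs applies, not just the first hypothesis) — and that the signature genuinely drops or the leading monomial genuinely drops, so the induction is well-founded. The divisibility assumption $\lc(\bfg_j) \divides \lc(\bfg_i)$ is what guarantees that $\SGPol(\bfg_i,\bfg_j)$ equals $\SPol(\bfg_i,\bfg_j)$ rather than a G-polynomial, and hence that the second hypothesis speaks about exactly the object we produce; I would flag explicitly where this is used. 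A secondary subtlety is that an s-reduction need be neither regular nor singular (as noted after Definition~\ref{weak-sig-reduction}), so when I claim "$\bfp$ $\redpol$-reduces by $\bfg_i$", I must verify the signature condition $t\sig(\bfg_i) \preceq \sig(\bfp)$ holds, which follows because $\lt(t\bfg_i)$ is (a multiple of) the term degree of the pair and $t\sig(\bfg_i) \preceq \sigS(\bfg_i,\bfg_j) \preceq \bfT$.
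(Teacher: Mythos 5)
Your high-level idea — when the leading polynomial terms cancel, the combination is a term times $\SPol(\bfg_i,\bfg_j)$, which is discharged by the second hypothesis if the pair is non-singular or by the first if it is singular (since then the signature drops strictly below $\bfT$) — is exactly the paper's, and it is the right idea. But your case analysis contains two genuine errors. First, the claim ``if the combination is regular, then one of the two summands strictly dominates in leading term'' is false: regularity is a statement about the \emph{signatures} of $a\bfg_i$ and $b\bfg_j$, which is independent of what the leading terms of the polynomial parts do. For instance, if the signatures live in distinct components $\bfe_i\neq\bfe_j$ the combination is automatically regular, yet the polynomial leading terms may still cancel. The correct case split is purely on the polynomial parts: (a) one of $\lm(a\bfg_i)$, $\lm(b\bfg_j)$ strictly dominates, (b) the leading monomials coincide but the coefficients do not fully cancel, (c) the leading terms cancel exactly. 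Your plan never isolates case (b), and that is precisely where the hypothesis $\lc(\bfg_j)\divides\lc(\bfg_i)$ earns its keep: it guarantees $\bfg_j$ is an $\redpol$-reducer of the result. Your sentence ``If \dots\ the leading terms cancel exactly, then since $\bfg_j$ $\redpol$-reduces $\bfp$\dots'' conflates (b) and (c) — after exact cancellation the leading monomial of $\bfp$ has dropped and there is no reason $\lt(\bfg_j)$ should divide it.

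Second, the ``lower-order correction'' you build in is unnecessary and makes your induction heavier than it needs to be. The paper disposes of it by first restricting, w.l.o.g., to the case where the multipliers $a,b$ are single terms (a polynomial-coefficient combination is split into a sum of term-coefficient ones, and the resulting $\redpol$-reductions are combined). With term multipliers, case (c) forces $\bfp = t\,\SPol(\bfg_i,\bfg_j)$ exactly, with no residual, so the non-singular/singular dichotomy on the pair finishes in one step and no auxiliary induction on the pair $(\text{signature},\text{leading monomial})$ is needed.
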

\begin{proof}
  Let $\mm{i} = \lm(\bfg_{i})$,  $\mm{j} = \lm(\bfg_{j})$, and $\mm{i,j} = \lcmlm(\bfg_{i},\bfg_{j})$. Consider a linear combination $\bfh = t_{i}\bfg_{i} + t_{j} \bfg_{j}$.
  Without loss of generality, combining the reductions, we may assume that $t_{i}$ and $t_{j}$ are terms.
  If $\LM(t_{i}\bfg_{i}) \neq \LM(t_{j}\bfg_{j})$, say $\LM(t_{i}\bfg_{i}) > \LM(t_{j}\bfg_{j})$, there is nothing to prove, as $\bfh$ can be
  reduced by $\bfg_{i}$, then $\bfg_{j}$.

  So assume that $\LM(t_{i}\bfg_{i}) = \LM(t_{j}\bfg_{j})$.
  Note that $\mm{i,j}$ divides the common multiple
  $\LM(t_{i}\bfg_{i}) = \LM(t_{j}\bfg_{j})$, say, $\LM(t_{i}\bfg_{i}) = t' \mm{i,j}$.
  By assumption on the leading coefficients, there exists $c \in R$ such that
  $\LT(t_{i}\bfg_{i})
  = t_{i}t' \frac{\mm{i,j}}{\mm{i}} \LT(\bfg_{i})
  = t_{i}t' c \frac{\mm{i,j}}{\mm{j}} \LT(\bfg_{j})$.

  If $\lm(\bfh) = \lm(t_{i}g_{i})$, the leading term of $\bfh$ is
  \(
    \LT(\bfh) = \LT(t_{i}\bfg_{i}) + \LT(t_{j}\bfg_{j})
    = (t_{i}c + t_{j}) t' \frac{\mm{i,j}}{\mm{j}} \LT(\bfg_{j})
  \)
  and $\bfh$ is reducible by $\bfg_{j}$.
  This reduction is a $\redpol$-reduction by construction.

  The remaining case is the case where $\LT(\bfh) \,{<}\, \LM(t_{i}\bfg_{i}) \,{=}\, \LM(t_{j}\bfg_{j})$.
  In this case, there exists a term $t$ such that $t_{i} = t \frac{\mm{i,j}}{\mm{i}}$ and
  $t_{j} = ct \frac{\mm{i,j}}{\mm{j}}$, and $\bfh = t \SPol(\bfg_{i},\bfg_{j})$.
  If the pair $(\bfg_{i},\bfg_{j})$ is non-singular, then by hypothesis, $\SPol(\bfg_{i},\bfg_{j}) =
  \SGPol(\bfg_{i},\bfg_{j})$ $\redpol$-reduces to $0$ modulo $\GG$, and so does $\bfh = t
  \SPol(\bfg_{i},\bfg_{j})$.
  If the pair is singular, then $\sig(\bfh) \precneq t\sigS(\bfg_{i},\bfg_{j}) \preceq \bfT$, and by hypothesis, $\bfh$ $\redpol$-reduces to $0$ modulo $\GG$.
\end{proof}

\begin{proof}[Proof of Th.~\ref{prop:correctness-Lichtblau}]
  \newcommand{\tp}{\tau}
\newcommand{\cp}{\chi} 
  Write $\GG = \{\bfg_{1},\dots,\bfg_{r}\}$ and $\GG_{z} = \{\bfz_{1},\dots,\bfz_{s}\}$.
  For all $i,j$, let $\cc{i} = \LC(\bfg_{i})$, $\mm{i} = \lm(\bfg_{i})$ and $\mm{i,j} = \lcmlm(\bfg_{i},\bfg_{j})$.
  Let $\bfh \in \II$ with signature $\bfT$ be such that $\bfh$ is not $\redpol$-reducible modulo $\GG$.
  Assume that $\bfT$ is minimal for this property, and that among such elements of $\II$ with signature $\bfT$, $\lm(\bfh)$ is minimal.

  Consider a decomposition of $\bfh$ with respect to $\GG$,
  \begin{equation}
    \label{eq:4}
    \textstyle\bfh = \sum_{u=1}^{k} \tp_{u}\bfg_{i_{u}} + \sum_{v=1}^{l} t^{(z)}_{v} \bfz_{j_{v}}, 
  \end{equation}
  such that for all $u,v$, 
  $\LT(\tp_{u}\bfg_{i_{u}}) \geq \LT(\tp_{u+1}\bfg_{i_{u+1}})$,
  $\max(\tp_{u}\sig(\bfg_{i_{u}})) \preceq \bfT$,
  $\max(t^{(z)}_{v}\sig(\bfz_{j_{v}})) \preceq \bfT$
  and $t^{(z)}_{v}\sig(\bfz_{j_{v}}) \geq t^{(z)}_{v+1}\sig(\bfz_{j_{v+1}})$.
  Such a representation exists, by definition of the signature of $\bfh$ and the first hypothesis on $\GG$.
  Assume that, among such representation, this one is minimal in the sense that $\LM(\tp_{1}\bfg_{i_{1}})$ is minimal,
  and the largest $j$ such that $\LM(\tp_{j}\bfg_{i_{j}}) = \LM(\tp_{1}\bfg_{i_{1}})$ is minimal for this property.
  For all $u$, let $\cp_{u} = \lc(\tp_{u})$.

  \paragraph{Case 1: $\bfh$ is not a syzygy} 
  We want to prove that $\LM(\tp_{1}\bfg_{i_{1}}) > \LM(\tp_{2}\bfg_{i_{2}})$.
  It will in particular imply that $\lt(\bfh) = \lt(\tp_{1}\bfg_{i_{1}})$, and thus that $\bfh$ is $\redpol$-reducible modulo $\GG$.
  By minimality of $\lm(\bfh)$, this will prove that $\bfh$ $\redpol$-reduces to 0 modulo $\GG$.

In order to reach a contradiction, assume that $\LM(\tp_{1}\bfg_{i_{1}}) {=}
  \LM(\tp_{1}\bfg_{i_{2}})$.
  By definition of the least common multiplier, there exists $m \in \Mon(A)$ such that
  $\LM(\tp_{1}\bfg_{i_{1}}) = \LM(\tp_{2}\bfg_{i_{2}}) = m \,\mm{i_{1},i_{2}}$.

  If $\cc{i_{1}}$ divides $\cc{i_{2}}$ or $\cc{i_{2}}$ divides $\cc{i_{1}}$, then by
  Lemma~\ref{lemme:1}, and expanding the s-reductions, $\tp_{1}\bfg_{i_{1}} + \tp_{2}\bfg_{i_{2}}$ admits a standard
  Sig-representation, which can be substituted in the
  representation~\eqref{eq:4}, contradicting minimality.
  
  For the other case, by assumption the G-polynomial of $\bfg_{i_{1}}$ and $\bfg_{i_{2}}$ is $\redpol$-reducible modulo $G$, so there exists $\bfg_{i_{3}} \in G$ such that
  \vspace{-0.05cm}
  \begin{enumerate}[a.]
    \item $\cc{i_{1}} \mm{i_{1},i_{2}}$ is divisible by $\lt(\bfg_{i_{3}})$, say, $t'_{1} \lt(\bfg_{i_{3}}) = \cc{i_{1}}
    \mm{i_{1},i_{2}}$;
    \item\label{item:bound_sig_g3} $t'_{1} \sig(\bfg_{i_{3}}) \preceq \sigS(\bfg_{{i}_{1}},\bfg_{i_{2}})$;
    \item $\cc{i_{2}} \mm{i_{1},i_{2}}$ is divisible by $\lt(\bfg_{i_{3}})$, say, $t'_{2} \lt(\bfg_{i_{3}}) = \cc{i_{2}}
    \mm{i_{1},i_{2}}$;
    \item $t'_{2} \sig(\bfg_{i_{3}}) \preceq \sigS(\bfg_{i_{1}},\bfg_{i_{2}})$.
  \end{enumerate}
  In particular, $\cc{i_{3}}$ divides $\cc{i_{1}}$, say, $\cc{i_{1}} = a'_{1} \cc{i_{3}}$.
  So the SG-polynomial of $\bfg_{i_{1}}$ and $\bfg_{i_{3}}$ is an S-polynomial, and by
  Lemma~\ref{lemme:1}, it $\redpol$-reduces to $0$ modulo $\GG$.
  So it admits a standard Sig-representation
  \begin{equation}
    \label{eq:5}
    \SGPol(\bfg_{i_{1}},\bfg_{i_{3}})
    = \frac{\mm{i_{1},i_{3}}}{\mm{i_{1}}} \bfg_{i_{1}}
    - a_{1} \frac{\mm{i_{1},i_{3}}}{\mm{i_{3}}}\bfg_{i_{3}}
    = \sum_{j\geq 1} t^{(1)}_{j} \bfg^{(1)}_{i_{j}} + \sum \text{syz.},
  \end{equation}
  where $\sum \text{syz.}$ is a linear combination of elements of $\GG_{z}$.
  So
  \begin{equation}
    \label{eq:11}
    \frac{\mm{i_{1},i_{3}}}{\mm{i_{1}}} \bfg_{i_{1}}
    =  a_{1}  \frac{\mm{i_{1},i_{3}}}{\mm{i_{3}}}\bfg_{i_{3}}
    + \sum_{j\geq 1} t^{(1)}_{j} \bfg^{(1)}_{i_{j}} + \sum \text{syz.},
  \end{equation}
  with $\LT(t^{(1)}_{j}\bfg^{(1)}_{i_{j}}) \leq \LT(\SPol(\bfg_{i_{1}},\bfg_{i_{3}})) < \mm{i_{1},i_{3}}$.
  Since $\mm{i_{1},i_{2}}$ is divisible by $\mm{i_{3}}$, $\mm{i_{1},i_{2}}$ is divisible by
  $\mm{i_{1},i_{3}}$, say, $\mm{i_{1},i_{2}} = \mu_{1} \mm{i_{1},i_{3}}$.
  So 
  \begin{align}
    \label{eq:12}
    \tp_{1}\bfg_{i_{1}} &= \cp_{1}m\mu_{1} \frac{\mm{i_{1},i_{3}}}{\mm{i_{1}}}\bfg_{i_{1}} \\
    &=  \cp_{1} z_{1} m\mu_{1} \frac{\mm{i_{1},i_{3}}}{\mm{i_{3}}}\bfg_{i_{3}}
    + \sum_{j\geq 1} \cp_{1} z_{1} m \mu_{1} t^{(1)}_{j} \bfg^{(1)}_{i_{j}}  + \sum \text{syz.}
  \end{align}
  and it is a standard representation of $\tp_{1}\bfg_{1}$.
  Furthermore, since the signature of $\bfg_{i_{3}}$ is bounded (property \ref{item:bound_sig_g3}), it is also a standard Sig-representation.
  Similarly, there exists $z_{2}$, $\mu_{2}$ and $(t^{(2)}_{j}, i_{j})$ such that
  \begin{align}
    \label{eq:13}
    \tp_{2}\bfg_{i_{2}} 
    &=  \cp_{2} a_{2} m\mu_{2} \frac{\mm{i_{2},i_{3}}}{\mm{i_{3}}}\bfg_{i_{3}}
    + \sum_{j\geq 1} \cp_{2} a_{2} m \mu_{2} t^{(2)}_{j} \bfg^{(2)}_{i_{j}}  + \sum \text{syz.}
  \end{align}
  and it is a standard Sig-representation.
  We can group both representations together, and obtain a standard Sig-representation of
  $\tp_{1}\bfg_{i_{1}}+\tp_{2}\bfg_{i_{2}}$
  \begin{align}
    \label{eq:14}
    \tp_{1}\bfg_{i_{1}}+\tp_{2}\bfg_{i_{2}} &=
      \cp_{1} a_{1} m\mu_{1} \frac{\mm{i_{1},i_{3}}}{\mm{i_{3}}}\bfg_{i_{3}}
      + \cp_{2} a_{2} m\mu_{2} \frac{\mm{i_{2},i_{3}}}{\mm{i_{3}}}\bfg_{i_{3}}
      + \sum \dots \\
    &= \big( \cp_{1} a_{1}
     + \cp_{2} a_{2} \big) \frac{\mm{i_{1},i_{2}}}{\mm{i_{3}}}
    m \bfg_{i_{3}} + \sum \dots 
  \end{align}
  which, substituted into~\eqref{eq:4}, contradicts the minimality assumption.
  
  \paragraph{Case 2: $\bfh$ is a syzygy}
  The same proof as above, if $\lt(\bfh)=0$, implies that $k=0$ in  \eqref{eq:4}.
  Now, consider, among all decompositions of the form \eqref{eq:4}, one where $j = \max(v : t^{(z)}_{v}\sig(\bfz_{j_{v}})) \simeq t^{(z)}_{1}\sig(\bfz_{j_{1}})$ is minimal.
  Assume that $j > 0$, and thus $t^{(z)}_{1}\sig(\bfz_{j_{1}}) \simeq t^{(z)}_{2}\sig(\bfz_{j_{2}})$.
  Then, by assumption, the $\sigG$-comb. of $\bfz_{1}$ and $\bfz_{2}$ is sig-reducible by $\GG_{z}$, thus, there exists $t'_{1} \in \Ter(A)$ and $j'_{1} \in \{1,\dots,s\}$ such that
  $t'_{1}\sig(\bfz_{j'_{1}}) = t^{(z)}_{1}\sig(\bfz_{j_{1}}) + t^{(z)}_{2}\sig(\bfz_{j_{2}})$, and so
  $\mathbf{z} := t^{(z)}_{1}\bfz_{1} + t^{(z)}_{2}\bfz_{2} - t'_{1}\sig(\bfz_{j'_{1}})$ has signature $\precneq \bfT$.
  So subtracting $\bfz$ from the decomposition~\eqref{eq:4} results in a decomposition with fewer terms matching the signature of $\bfh$, contradicting the minimality of~$j$.
\end{proof}

\begin{table}
  \centering
\caption{Comparison between Algo.~\ref{algo:KRK-like} and Algo.~\ref{algo:Lichtblau}}
\label{tab:pairs}
\vspace{-0.3cm}
\small
\begin{tabu}{crrrr}
  \toprule
\multirow{2}{*}{\textbf{System}} & \multicolumn{2}{c}{Algo~\ref{algo:KRK-like}} & \multicolumn{2}{c}{Algo.~\ref{algo:Lichtblau}} \\
& Pairs/red./to 0 & Time & Pairs/red./to 0 & Time\\
\midrule
\textbf{Katsura-4} & 420/188/0 & 1.35 & 855/412/0 & 1.60\\
\textbf{Katsura-5} & 2048/723/0 & 32.40 & 7178/3983/0 & 79.87\\
\textbf{Cyclic-5} & 221/63/0 & 0.37 & 347/158/0 & 0.71\\
\textbf{Cyclic-6} & 3019/742/8 & 200.33 & 9672/5782/8 & 616.82\\
\bottomrule
\end{tabu}
\vspace{-0.2cm}
\end{table}

\newcommand{\pbox}[2][c]{\begin{table}{#1}#2\end{table}}

\begin{table*}
  \centering
  \caption{Comparative timings for module computations with the signature-based algorithms and with Magma (in seconds)}
\label{tab:timings}
\vspace{-0.3cm}
\small
  \begin{tabu}{crrrcrrr}
  \toprule
  \multirow{2}{*}{\textbf{System}} & \multicolumn{3}{c}{With signatures (Algo. \ref{algo:KRK-like})} &\hspace{5pt}& \multicolumn{3}{c}{Magma} \\
  & {Sig-GB} & {Recons.} & {\textbf{Total}} && {GB} & {\textbf{GB with coordinates}} & {\textbf{Module of syzygies}} \\
\midrule
\textbf{Cyclic-5} & 0.4 & 0.1 & \textbf{0.5} &  & 0.01 & \textbf{954.6} & \textbf{954.8}\\
\textbf{Cyclic-6} & 200.3 & 10.6 & \textbf{210.9} &  & 2.08 & \textbf{>24h} & \textbf{>24h}\\
\bottomrule
\end{tabu}
\vspace{-0.2cm}

\end{table*}

\section{Algorithms in practice}
\label{sec:algorithms-practice}

\subsection{Further optimizations}
\label{sec:furth-optim}

In this section, we briefly describe additional criteria which can be used to eliminate elements.
Firstly, in both algorithms, one can use Buchberger's coprime and chain criteria (either as-is or using Gebauer and Möller's implementation).
Buchberger's criterion does not require any modification to work with signatures, whereas the chain criterion needs to ensure that the signature of the pairs used to discard the redundant one is small enough \cite{GerdtHashemi-2013}.
Note that in all cases, we need to consider terms (with their coefficients) and not just monomials.
For Lichtblau's algorithm, refined versions of those criteria relaxing the condition on the coefficients have been described in~\cite{Lichtblau}  and can also be used here.

We have already stated that some criteria can be used to eliminate pairs based on their signatures.
We have also already cited the F5 criterion, which allows to pre-populate the basis of syzygies with the signatures of predictable syzygies, as well as the possibility of discarding G-polynomials which are sig-reducible by $\GG_{z}$, or which are super reducible and s-reducible by $\bfg \in \GG$.
Similarly, in Lichtblau's algorithm, one can discard any S-polynomial which is super reducible by $\GG$.
A natural question is whether the cover criterion would allow to systematically discard G-polynomials in Algorithm~\ref{algo:KRK-like}, or to discard new elements in Algorithm~\ref{algo:Lichtblau} (including non-regular S-polynomials).
Experimentally, it appears that indeed, most such elements which are covered can be discarded without impacting the correctness of the algorithm.

Another point which can have a large impact on the complexity is the choice of the order of the pairs, \emph{i.e. }, how to break ties between elements with signatures which are $\simeq$.
A strategy which seems to yield good results over $\mathbb{Z}$ is to compare the absolute value of the coefficient of the signatures, so as to create super reducers and covering candidates sooner.

We have only mentioned top reductions, that is, reductions of the leading coefficient, but as usual, the definitions generalize to allow reductions of the rest of the terms.
Finally, we have only defined reductions where the leading coefficient of the reducer divides the coefficient to be reduced.
In some rings, and in particular in the case of Euclidean rings, it is also possible to perform modular reductions on the coefficients without impacting the correctness of the result.
This significantly improves the performances of the algorithm.
The same can be done for sig-reductions.

\subsection{Experimental data}
\label{sec:experimental-data}

We have written a prototype implementation of both algorithms in \textsf{Magma}\footnote{\url{https://gitlab.com/thibaut.verron/signature-groebner-rings}}, for the PoT ordering and $R=\ZZ$.
We report in Table~\ref{tab:pairs} data on the number of pairs being processed, reduced and reduced to zero for different benchmark systems (Katsura-$n$ and Cyclic-$n$), as well as indicative computation times.
In practice, it appears that Algo.~\ref{algo:KRK-like} is more efficient than Algo.~\ref{algo:Lichtblau}, both in terms of number of computed pairs and in time.
This appears to be due to the relaxed restrictions allowing non-singular polynomials, more than the lack of criteria.


Our prototype implementation of both algorithms of this paper is slower than Magma's implementation of F4~\cite{faugere:1999:F4} over $\ZZ$ for merely computing Gröbner bases.
As mentioned earlier, the computation of signatures also allows to compute the coefficients of the elements of the Gröbner basis in terms of the input, and a basis of the module of syzygies, by performing and tracking s-reductions~\cite{Gao-2015-new-framework-for}.
The process only depends on the definition of a Sig-GB and a Sig-basis of syzygies, and therefore works in our setting as well.
In Table~\ref{tab:timings}, we give the computation time for this reconstruction using  Algo.~\ref{algo:KRK-like}, as well as comparable routines in \textsf{Magma}\footnote{\textsf{Groebner} of an \textsf{IdealWithFixedBasis} for a GB with coordinates, and \textsf{SyzygyMatrix} for a basis of the syzygy module.}. 
In several instances, we observe that the use of signatures gives a significant speed-up for those computations.

One particularity of signature-based algorithms over rings is that, due to the partial order on the signatures, they typically compute a large number of elements with incomparable signatures.
This problem does not appear over fields, and future work will focus on ways to eliminate more of those elements, or to speed-up the computations at a given signature (for instance using linear algebra techniques similar to F4).




\end{document}